\newtheorem{theorem}{Theorem}[section]
\newtheorem{prop}{Proposition}[section]
\newtheorem{corollary}{Corollary}[section]
\title{Sampling-Decomposable Generative Adversarial Recommender}
\author{%
	Binbin Jin$^{\dag}$, Defu Lian$^{\dag\S}$\thanks{Defu Lian and Enhong Chen are corresponding authors.}, Zheng Liu$^{\ddag}$, Qi Liu$^{\dag\S}$
	Jianhui Ma$^{\dag}$, Xing Xie$^{\ddag}$ , Enhong Chen$^{\dag\S*}$\\
	$^{\dag}$School of Computer Science and Technology, University of Science and Technology of China\\
	$^\S$School of Data Science, University of Science and Technology of China\\
	$^{\ddag}$Microsoft Research Asia\\
	bb0725@mail.ustc.edu.cn, \{liandefu, qiliuql, jianhui, cheneh\}@ustc.edu.cn, \\
	\{zhengliu, xingx\}@microsoft.com\\
}
\begin{document}
	
\maketitle

\begin{abstract}
	Recommendation techniques are important approaches for alleviating information overload. Being often trained on implicit user feedback, many recommenders suffer from the sparsity challenge due to the lack of explicitly negative samples. The GAN-style recommenders (i.e., IRGAN) addresses the challenge by learning a generator and a discriminator adversarially, such that the generator produces increasingly difficult samples for the discriminator to accelerate optimizing the discrimination objective. However, producing samples from the generator is very time-consuming, and our empirical study shows that the discriminator performs poor in top-k item recommendation. To this end, a theoretical analysis is made for the GAN-style algorithms, showing that the generator of limit capacity is diverged from the optimal generator. This may interpret the limitation of discriminator's performance. Based on these findings, we propose a Sampling-Decomposable Generative Adversarial Recommender (SD-GAR). In the framework, the divergence between some generator and the optimum is compensated by self-normalized importance sampling; the efficiency of sample generation is improved with a sampling-decomposable generator, such that each sample can be generated in $O(1)$ with the Vose-Alias method. Interestingly, due to decomposability of sampling, the generator can be optimized with the closed-form solutions in an alternating manner, being different from policy gradient in the GAN-style algorithms. We extensively evaluate the proposed algorithm with five real-world recommendation datasets. The results show that SD-GAR outperforms IRGAN by 12.4\% and the SOTA recommender by 10\% on average. Moreover, discriminator training can be 20x faster on the dataset with more than 120K items.
\end{abstract}

\section{Introduction}
With the popularity of Web 2.0, content disseminated on Internet has been growing explosively, which greatly intensifies the information overload problem. Recommender system is regarded as an important approach to address this problem by filtering out irrelevant information automatically. In recent years, personalized ranking algorithms, such as~\cite{he2017neural,hsieh2017collaborative,lian2018xdeepfm,liang2018variational,huang2019exploring,liu2011personalized,wang2019mcne}, have been widely used in E-commerce and online advertisement, creating huge business value and social impact for various kinds of web services. Since the recommendation algorithms are usually trained with implicit user feedback such as click and purchase history, the lack of explicitly negative samples become an imperative problem. In other words, how to discover and utilize informative negative samples becomes critical in optimizing the learning performance.

Existing work on negative sampling can be grouped into two categories. One category is to treat all items without user interaction as negative samples, which are assigned with a small confidence score~\cite{hu2008collaborative}. The algorithm has been proved to impose the gravity regularizer, which penalizes the non-zero prediction for uninteracted items~\cite{bayer2017generic}. A number of algorithms have been developed for its optimization, from batch-based ALS to mini-batch SGD~\cite{krichene2018efficient}. To further distinguish the confidence of being negative, the user-item confidence matrix has been regularized to be sparse and low-rank to facilitate learning efficiency~\cite{pan2009mind,he2019fast}. The other category is to sample negative items from those uninteracted ones with various kinds of neural networks such as GAN-based models~\cite{chae2018cfgan, wang2019adversarial}, GCN-based models~\cite{he2020lightgcn} and AE-based models~\cite{liang2018variational}. A widely used sampling strategy is to draw negative items either w.r.t. uniform distribution~\cite{rendle2009bpr,zhang2013optimizing,sun2019rotate},  popularity~\cite{pan2008one}, or based on recommendation models~\cite{rendle2014improving,yuan2016lambdafm,hsieh2017collaborative,wang2017irgan}. Sampling based on recommendation models is regarded to be more effective; and in recent years, the GAN-style algorithms, e.g., IRGAN~\cite{wang2017irgan}, become highly popular. It is also discussed in~\cite{goodfellow2014distinguishability}, that the GAN-style framework could be a promising direction of discovering informative negative samples. As discussed in this work, it is also more possible for the framework to seamlessly integrate approximate search algorithms, such as ALSH~\cite{shrivastava2014asymmetric,neyshabur2015symmetric}, PQ~\cite{jegou2011product} and HNSW~\cite{malkov2018efficient}, with complex recommendation algorithms, so as to mutually reinforce the learning of both algorithms.

The GAN-style recommendation algorithms learn a generator and a discriminator in an iterative and adversarial way, such that the generator may produce increasingly difficult samples for the discriminator to accelerate optimizing the discrimination objective. However, the existing GAN-style recommendation algorithms may suffer from two severe limitations. 
%On the one hand, by taking the generator as a negative sampler, the discriminator rather than the generator is more suitable for the top-k item recommendation~\cite{ding2019reinforced,cai2018kbgan}.
On the one hand, the discriminator rather than the generator is more suitable for the top-k item recommendation~\cite{ding2019reinforced,cai2018kbgan}. In addition to the fact that the generator acts as a negative sampler, another reason is that the discriminator learns directly from training data, whereas the generator merely learns from samples drawn from the generator distribution; besides, the learning of the generator is guided by the discriminator, which can be not reliable. Unfortunately, the discriminator of existing algorithms performs very poor according to our empirical study.
%We argue the reason for this phenomenon is that the discriminator learns directly from training data, whereas the generator merely learns from samples drawn from the generator distribution; besides, the learning of the generator is guided by the discriminator, which can be not reliable. Unfortunately, it performs very poor according to our empirical study.
On the other hand, generating samples from the generator is time-consuming due to the generator's large sample space, which restricts it from being applied for large-scale datasets.

To this end, in this paper, a theoretical analysis is made for the GAN-style algorithms, which shows that a generator of enough capacity has the optimal solution given the discriminator. The divergence between the generator of limit capacity and the optimum may lead to the limitation of discriminator's performance. Based on these findings, we propose the Sampling-Decomposable Generative Adversarial Recommender (SD-GAR), where sampling is carried out with a decomposable generator. In the framework, the divergence between the generator and its optimum is compensated by self-normalized importance sampling, so that the recommendation performance of discriminator is considerably improved. Another interesting result of using the self-normalized importance sampling is that if the generator is degenerated to the uniform distribution, the discriminator intrinsically subsumes recommendation models with dynamic negative sampling~\cite{zhang2013optimizing} and self-adversarial sampling~\cite{sun2019rotate}. Due to decomposability of sampling, the generator can be optimized with its closed-form solutions in an alternating manner, which is different from using policy gradient as in the GAN-style algorithms and can lead to better training efficacy. More importantly, the efficiency of sample generation can be remarkably improved, where each sample can be generated in $O(1)$ with the Vose-Alias method~\cite{walker1977efficient}. We extensively evaluate the proposed algorithm with five real-word recommendation datasets of varied size and difficulty of recommendation. The experimental results show that the proposed algorithm outperforms IRGAN by 12.4\% on average and the SOTA recommender by 10\% w.r.t. NDCG@50. The efficiency study indicates that discriminator training can be accelerated by 20x in the dataset with more than 120K items.

\section{Sampling-Decomposable Generative Adversarial Recommender}
Before elaborating the proposed SD-GAR, we first briefly introduce the GAN-style recommenders (e.g., IRGAN~\cite{wang2017irgan}) and provide some theoretical analysis results. Following that, we propose a new objective function base on expectation approximation with self-normalized importance sampling. Within the objective function, we propose a sampling-decomposable generator and investigate its optimization algorithm. Finally, we provide complexity analysis to SD-GAR and comparison with IRGAN. In the following, to be generic, we use context to represent user, time, location, behavior history and so on. Denote by $\mathcal{C}$ the set of $N$ contexts, $\mathcal{I}$ the set of $M$ items and $\mathcal{I}_c$ interacted items in a context $c$. 
\subsection{Analysis to IRGAN}\label{sec:preliminary}
Generally speaking, IRGAN applies a game theoretical minimax game in the GAN framework for information retrieval (IR) tasks, and has been used for three specific IR scenarios including web search, item recommendation and question answering. More specifically, IRGAN iteratively optimizes a generator $G$ and a discriminator $D$, such that the generator produces increasingly difficult samples for the discriminator to minimize the discrimination objective. In case of recommendation from implicit user feedback, IRGAN formally optimizes the following objective function:
\begin{equation}
\max_{G}\min_{D}\mathcal{J}(D,G)=\sum_{c\in\mathcal{C}}-\mathbb{E}_{i\sim P_{\text{true}}(\cdot|c)}\log D(i|c)-\mathbb{E}_{j \sim P_G(\cdot|c)}\log\left(1-D(j|c)\right),
\label{eq:irgan}
\end{equation}
where $P_{\text{true}}(\cdot|c)$ is an underlying true relevance distribution over candidate items and $P_G(\cdot|c)$ is a probability distribution used to generate negative samples. $D(i|c)=\sigma(g_{\phi}(c,i))=\frac{1}{1+\exp(-g_{\phi}(c,i))}$ estimates the probability of preferring item $i$ in a context $c$. As generative process is stochastic process over discrete data, IRGAN applies the REINFORCE algorithm for optimizing the generator. When $D$ and $G$ is well trained, the generator $G$ is used for recommendation, since the discriminator $D$ performs poor in practice according to our empirical study. However, by considering the generator as a negative sampler so as to address the sparsity issue, the discriminator $D$ rather than $G$ should be used for recommendation. Before understanding this problem, we first provide theoretical analysis for IRGAN.
\begin{theorem}\label{thm:irgan_no_regular}
	Assume $G$ has enough capacity. Given the discriminator $D$,  $\min_{G}\mathcal{J}(D,G)$ achieves the optimum when
	\begin{equation}
	P_{G^\star}(\cdot|c)=\text{one-hot}(\arg\max_i(g_{\phi}(c,i))).
	\end{equation}
\end{theorem}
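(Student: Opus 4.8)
The plan is to hold the discriminator $D$ fixed and observe that, as a function of the generator, $\mathcal{J}(D,G)$ is \emph{linear} in each conditional distribution $P_G(\cdot\mid c)$ and fully decouples across contexts. Then the claim follows from the elementary fact that a linear functional over a probability simplex attains its extreme value at a vertex, and the vertices of the simplex over $\mathcal{I}$ are exactly the one-hot distributions. The ``enough capacity'' hypothesis is what lets us search over the whole simplex rather than over a restricted parametric family, so that the extremal vertex is actually attained.

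Concretely, I would first split off the discriminator-only term, writing $\mathcal{J}(D,G)=\sum_c -\mathbb{E}_{i\sim P_{\text{true}}(\cdot\mid c)}\log D(i\mid c) + \sum_c \mathbb{E}_{j\sim P_G(\cdot\mid c)}\bigl[-\log(1-D(j\mid c))\bigr]$, so that the first sum does not involve $G$ and optimizing the generator reduces to, separately for each $c$, the linear program $\max_{p\in\Delta_M} F_c(p)$ where $F_c(p)=\sum_{j\in\mathcal{I}} p_j\, w_j(c)$ and $w_j(c):=-\log\bigl(1-D(j\mid c)\bigr)$; here $\Delta_M$ is the probability simplex and the ``$\max$'' reflects that the generator is pushed toward items the discriminator rates as relevant. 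Since $g_\phi$ is real-valued, $D(j\mid c)=\sigma(g_\phi(c,j))\in(0,1)$ strictly, so each weight $w_j(c)$ is a finite positive number and $F_c$ is a well-defined bounded linear functional. A linear functional on the compact polytope $\Delta_M$ attains its extreme values at the vertices $e_1,\dots,e_M$, and the maximizing vertex is the coordinate with the largest weight $w_j(c)$. Because $t\mapsto -\log(1-t)$ is strictly increasing on $(0,1)$ and $\sigma$ is strictly increasing on $\mathbb{R}$, the map $g_\phi(c,j)\mapsto w_j(c)$ is strictly increasing, hence $\arg\max_j w_j(c)=\arg\max_i g_\phi(c,i)$, giving $P_{G^\star}(\cdot\mid c)=\text{one-hot}(\arg\max_i g_\phi(c,i))$. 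I would add the remark that if several items tie for $\max_i g_\phi(c,i)$ then every distribution supported on the tie set is optimal and the stated one-hot is one such optimizer.

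The one delicate point — and the main obstacle — is reading ``$G$ has enough capacity'' correctly and arguing attainability. The IRGAN generator is realized as a softmax $P_G(i\mid c)\propto\exp g_\theta(c,i)$, which necessarily has full support and therefore never equals the degenerate vertex $e_{k^\star}$; it can only approach it in the limit where the logit gap $g_\theta(c,k^\star)-g_\theta(c,j)\to\infty$. So the precise statement is that the supremum of $F_c$ over capacity-limited generators equals $F_c(e_{k^\star})$ but is \emph{not} achieved, whereas under the enough-capacity hypothesis (i.e., $P_G(\cdot\mid c)$ free in $\Delta_M$) it is achieved exactly at the one-hot distribution. Making this distinction explicit is important because it is exactly the phenomenon the paper exploits: the true optimal generator degenerates to a point mass that finite-capacity softmax generators cannot represent, so the discriminator trained against such a generator is never confronted with the ``right'' negatives — which motivates the self-normalized importance-sampling correction introduced later.
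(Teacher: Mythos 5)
Your proof is correct and follows essentially the same route as the paper's: drop the $G$-independent term, decouple over contexts, and observe that the expectation of a fixed function of $j$ under $P_G(\cdot\mid c)$ is extremized by a point mass on the extremal item, which by monotonicity of $\sigma$ and $t\mapsto-\log(1-t)$ is $\arg\max_i g_\phi(c,i)$. Your added remarks on ties and on non-attainability under a softmax parametrization are correct refinements the paper omits, but the core argument is the same.
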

The proof is provided in the appendix. When generating samples from $P_{G^*}$, it is reduced to binary function search problem~\cite{tan2020fast}, and many approximate search algorithms, such as PQ, HNSW and ALSH, can be applied for fast search. Therefore, this framework seamlessly integrates approximate search algorithms with any complex recommendation algorithms, which are represented by the discriminator $D$. %This establishes strong connection of IRGAN with many algorithms on sampling from model-based distributions~\cite{rendle2014improving,hsieh2017collaborative,yuan2016lambdafm}.
However, sampling from $P_{G^\star}$ suffers from the false negative issue, since items with large $g_{\phi}(c,i)$ score are also more likely to be positive. To this end, we introduce randomness into the generator by imposing entropy regularization over the generator, i.e., $\mathcal{H}(P_G(\cdot|c))=-\sum_{i\in \mathcal{I}}P_G(i|c)\log P_G(i|c)$.

\begin{theorem}\label{thm:irgan_entropy}
	Assume $G$ has enough capacity and let $f_c(i)=-\log\left(1-D(i|c)\right)=\log(1+\exp(g_{\phi}(c,i)))$. Given the discriminator $D$,  $\min_{G}\mathcal{J}(D,G) - T\cdot \mathcal{H}(P_G(\cdot|c))$ achieves the optimum when
	\begin{equation}
	P_{{G}_T^\star}(i|c)=\frac{\exp\left(f_c(i)/T\right)}{\sum_{j\in\mathcal{I}}\exp\left(f_c(j)/T\right)}.
	\end{equation}
\end{theorem}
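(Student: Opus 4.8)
The plan is to fix the discriminator $D$ and optimise over the generator only. First I would discard the term $-\mathbb{E}_{i\sim P_{\text{true}}(\cdot|c)}\log D(i|c)$, which does not involve $G$; the objective to be optimised over $G$ then becomes, up to a $G$-independent constant, a sum over contexts in which the summand for context $c$ depends only on the single distribution $P_G(\cdot|c)$. Hence the problem decomposes into $|\mathcal{C}|$ independent sub-problems. With $f_c(i)=-\log(1-D(i|c))$, the per-context objective is the entropy-regularised linear functional $L_c(p)=\sum_{i\in\mathcal{I}}p_i f_c(i)+T\,\mathcal{H}(p)$, which the generator's step maximises over $p$ ranging in the probability simplex $\Delta=\{p\in\mathbb{R}_{\ge 0}^{M}:\sum_i p_i=1\}$ (this is where the ``enough capacity'' hypothesis enters: it lets us optimise $P_G(\cdot|c)$ freely over $\Delta$ rather than over a restricted parametric family).

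The cleanest route from here is a Gibbs / free-energy identity. Set $Z_c=\sum_{j\in\mathcal{I}}\exp(f_c(j)/T)$ and $Q_c(i)=\exp(f_c(i)/T)/Z_c$; note that $Q_c$ is precisely the claimed $P_{G_T^\star}(\cdot|c)$. A one-line computation — expand $\mathrm{KL}(p\|Q_c)=\sum_i p_i\log p_i-\sum_i p_i\log Q_c(i)$ and substitute $\log Q_c(i)=f_c(i)/T-\log Z_c$ — gives $L_c(p)=T\log Z_c-T\,\mathrm{KL}(p\|Q_c)$. Since $T>0$ and $\mathrm{KL}(p\|Q_c)\ge 0$ with equality iff $p=Q_c$, the functional $L_c$ attains its maximum uniquely at $p=Q_c$; summing over $c$ yields the stated closed form. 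Equivalently one may argue by Lagrange multipliers: attach $\lambda_c$ to the constraint $\sum_i p_i=1$, set $\nabla_p\big(L_c(p)-\lambda_c(\sum_i p_i-1)\big)=0$ to get $f_c(i)-T(\log p_i+1)=\lambda_c$, solve $p_i\propto\exp(f_c(i)/T)$, and promote this stationary point to the global maximiser using strict concavity of $\mathcal{H}$ on $\Delta$ (the nonnegativity constraints are inactive since $\exp(\cdot)>0$).

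The argument is essentially a routine variational computation, so there is no deep obstacle; the steps needing genuine care are (i) checking that everything except the $P_G(\cdot|c)$-summands is truly independent of $G$, so the problem decouples context by context; (ii) invoking the capacity assumption to license free optimisation over the simplex; and (iii) establishing \emph{uniqueness} of the optimum rather than mere stationarity, for which the $\mathrm{KL}$ rewriting (or strict concavity of the entropy) is the clean tool. As a consistency check one can let $T\to 0^{+}$: $Q_c$ then concentrates on $\arg\max_i f_c(i)=\arg\max_i g_{\phi}(c,i)$, since $f_c$ is strictly increasing in $g_{\phi}$, recovering Theorem~\ref{thm:irgan_no_regular}.
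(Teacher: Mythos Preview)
Your proposal is correct. Both you and the paper begin identically: fix $D$, drop the $P_{\text{true}}$ term, and decouple across contexts into a simplex-constrained optimisation. From there the paper proceeds mechanically via the full KKT system: it introduces multipliers $\alpha_k$ for each nonnegativity constraint $-x_k<0$ and $\beta$ for the normalisation, invokes convexity and affinity to justify sufficiency of KKT, and solves the stationarity/complementary-slackness equations to obtain the softmax form. Your primary route---the Gibbs variational identity $L_c(p)=T\log Z_c-T\,\mathrm{KL}(p\|Q_c)$---is genuinely different and, for this problem, cleaner: it bypasses multipliers entirely, delivers the global maximiser and its \emph{uniqueness} in one stroke via $\mathrm{KL}\ge 0$, and makes the $T\to 0^+$ limit transparent. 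The Lagrange-multiplier alternative you sketch is essentially the paper's argument, though you streamline it by observing up front that the inequality constraints are inactive (since $\exp(\cdot)>0$) and by appealing to strict concavity of $\mathcal{H}$ rather than writing out the full KKT conditions. Either route is entirely adequate here; the KL rewriting buys you uniqueness and a sanity check for free, while the KKT computation is the textbook template that generalises more readily when the constraints are not so benign.
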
 
The proof is provided in the appendix. When $T\rightarrow0$, $P_{{G}_T^\star}\rightarrow P_{G^\star}$, and when $T\rightarrow +\infty$, $P_{{G}_T^\star}$ is degenerated to a uniform distribution. Therefore, $T$ controls the randomness of the generator. 

\subsection{A New Objective Function}
It is infeasible to directly sample from $P_{{G}_T^\star}$ due to large sample space and potentially large computational cost of $g_{\phi}(c,i)$. Therefore, a generator $Q_{{G}}(\cdot|c)$ with limit capacity is usually used to approximate the $P_{{G}_T^\star}$, with the aim of improving sampling efficiency. However, the divergence between the generator with limit capacity and the optimum with sufficient capacity may lead to the limitation of discriminator's performance in top-k item recommendation. To compensate the divergence between them, we propose to approximate $\mathcal{J}(D,{G}_T^\star)$ with self-normalized importance sampling, by drawing a set of samples $\mathcal{S}_c$ from $Q_{{G}}(\cdot|c)$ in each context. Formally, for each context, given a set of samples $\mathcal{I}_c$ from $P_{\text{true}}(\cdot|c)$ observed as the training data, $\mathcal{J}(D,{G}_T^\star)$ is approximated as
\begin{equation}
\begin{gathered}
\mathcal{J}(D,{G}_T^\star)\approx\mathcal{V}_T(D,\mathcal{S})=\sum_{c\in\mathcal{C}}\left(-\frac{1}{|\mathcal{I}_c|}\sum_{i\in\mathcal{I}_c}\log D(i|c)-\sum_{j\in \mathcal{S}_c} w_{cj}\log\left(1-D(j|c)\right)\right),\\
w_{cj}=\frac{\exp\left(f_c(j)/T-\log \tilde{Q}_{{G}}(j|c)\right)}{\sum_{i\in \mathcal{S}_c}\exp\left(f_c(i)/T-\log \tilde{Q}_{{G}}(i|c)\right)}.
\label{eq:obj_d}
\end{gathered}
\end{equation}
where $\mathcal{S}=\bigcup_{c\in \mathcal{C}} \mathcal{S}_c$ and $\tilde{Q}_{{G}}(j|c)$ is the unnormalized $Q_{{G}}(j|c)$. The detailed derivation is provided in the appendix. This approximate quantity satisfies the following properties.

\begin{prop}[\textbf{Theorem 9.2}~\cite{mcbook}]\label{prop:unbias}
	$\mathcal{V}_{T}(D,\mathcal{S})$ is an asymptotic unbiased estimator of $\mathcal{J}(D,{G}_T^\star)$, 
	\begin{displaymath}
	\mathbb{P}\left(\lim_{\forall c,|\mathcal{S}_c|\rightarrow\infty}\mathcal{V}_{T}(D,\mathcal{S})=\mathcal{J}(D,{G}_T^\star)\right)=1.
	\end{displaymath}
\end{prop}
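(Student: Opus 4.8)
The plan is to recognize $\mathcal{V}_T(D,\mathcal{S})$ as a sum over contexts of self-normalized importance-sampling estimators and to push the strong law of large numbers (SLLN) through the self-normalization via the continuous mapping theorem; this is precisely the argument underlying Theorem~9.2 of~\cite{mcbook}, here specialized to a finite item set $\mathcal{I}$. First I would dispose of the first summand: since $P_{\text{true}}(\cdot|c)$ is taken to be the empirical distribution supported on the observed interactions $\mathcal{I}_c$, we have $-\frac{1}{|\mathcal{I}_c|}\sum_{i\in\mathcal{I}_c}\log D(i|c)=-\mathbb{E}_{i\sim P_{\text{true}}(\cdot|c)}\log D(i|c)$ identically. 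This term carries no randomness and already equals the corresponding term of $\mathcal{J}(D,{G}_T^\star)$, so the whole claim reduces to the behavior of the second summand as $n_c:=|\mathcal{S}_c|\to\infty$.

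Second, I would set up the importance-sampling picture for a fixed context $c$. Write $\mathcal{S}_c=\{j_c^{(1)},\dots,j_c^{(n_c)}\}$ as i.i.d.\ draws from the proposal $Q_{{G}}(\cdot|c)=\tilde{Q}_{{G}}(\cdot|c)/Z_c$ with $Z_c=\sum_{k\in\mathcal{I}}\tilde{Q}_{{G}}(k|c)$, and set $u_c(j)=\exp\!\big(f_c(j)/T-\log\tilde{Q}_{{G}}(j|c)\big)$ and $h_c(j)=-\log(1-D(j|c))=f_c(j)$. Then $\sum_{j\in\mathcal{S}_c}w_{cj}h_c(j)=\big(\tfrac{1}{n_c}\sum_{t}u_c(j_c^{(t)})h_c(j_c^{(t)})\big)\big/\big(\tfrac{1}{n_c}\sum_{t}u_c(j_c^{(t)})\big)$. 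Because $\mathcal{I}$ is finite and the proposal has full support, i.e.\ $\tilde{Q}_{{G}}(j|c)>0$ for every $j$ (which I would state as the standing regularity assumption), both $u_c$ and $u_c h_c$ are bounded, hence $Q_{{G}}(\cdot|c)$-integrable; the SLLN then gives, almost surely, $\tfrac{1}{n_c}\sum_{t}u_c(j_c^{(t)})h_c(j_c^{(t)})\to\mathbb{E}_{Q_{{G}}}[u_c h_c]$ and $\tfrac{1}{n_c}\sum_{t}u_c(j_c^{(t)})\to\mathbb{E}_{Q_{{G}}}[u_c]>0$.

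Third, I would apply the continuous mapping theorem: $(a,b)\mapsto a/b$ is continuous at the limit point since the denominator limit is strictly positive, so $\sum_{j\in\mathcal{S}_c}w_{cj}h_c(j)\to\mathbb{E}_{Q_{{G}}}[u_c h_c]/\mathbb{E}_{Q_{{G}}}[u_c]$ almost surely. A short computation finishes the identification: $\mathbb{E}_{Q_{{G}}}[u_c h_c]=\tfrac{1}{Z_c}\sum_{j}\tilde{Q}_{{G}}(j|c)\cdot\tfrac{\exp(f_c(j)/T)}{\tilde{Q}_{{G}}(j|c)}h_c(j)=\tfrac{1}{Z_c}\sum_{j}\exp(f_c(j)/T)h_c(j)$, and likewise $\mathbb{E}_{Q_{{G}}}[u_c]=\tfrac{1}{Z_c}\sum_{j}\exp(f_c(j)/T)$, so the intractable constant $Z_c$ cancels and the ratio equals $\sum_{j}P_{{G}_T^\star}(j|c)h_c(j)=-\mathbb{E}_{j\sim P_{{G}_T^\star}(\cdot|c)}\log(1-D(j|c))$, exactly the per-context term of $\mathcal{J}(D,{G}_T^\star)$. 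Finally, since $\mathcal{C}$ is finite, the event that all per-context limits hold is a finite intersection of probability-one events and thus has probability one; adding the (deterministic) first terms and summing over $c\in\mathcal{C}$ yields $\mathbb{P}\big(\lim_{\forall c,|\mathcal{S}_c|\to\infty}\mathcal{V}_T(D,\mathcal{S})=\mathcal{J}(D,{G}_T^\star)\big)=1$.

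I do not expect a genuine obstacle: the statement is a textbook property of self-normalized importance sampling, and the finiteness of $\mathcal{I}$ removes the usual moment/integrability conditions. The only points that need care are (i) making the full-support assumption on $Q_{{G}}(\cdot|c)$ explicit, since otherwise $P_{{G}_T^\star}(\cdot|c)$ need not be absolutely continuous with respect to the proposal and some weight is ill-defined; (ii) the bookkeeping showing the self-normalization cancels the unknown constant $Z_c$; and (iii) the remark that the first summand is deterministic once $P_{\text{true}}$ is identified with the empirical distribution on $\mathcal{I}_c$, so that the asymptotic unbiasedness is really a statement about the sampled term alone.
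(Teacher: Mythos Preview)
Your argument is correct and is precisely the standard proof of almost-sure consistency for the self-normalized importance-sampling estimator: SLLN for numerator and denominator, followed by the continuous mapping theorem at a ratio with strictly positive denominator, with the finiteness of $\mathcal{I}$ guaranteeing all required integrability. The paper itself does not prove this proposition but simply invokes Theorem~9.2 of~\cite{mcbook}; your write-up is exactly the specialization of that theorem to the present setting, so there is no methodological difference to report.
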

Since we focus on modeling $Q_{{G}}(\cdot|c)$ rather than $P_{\text{true}}(\cdot|c)$ in this paper, we only consider the uncertainty from $Q_{{G}}(\cdot|c)$ which can provide guidance for variance reduction and help optimize the proposal $Q_{{G}}(\cdot|c)$. The variance is then approximated with the delta method~\cite{mcbook}, 
\begin{equation}
\text{Var}\left(\mathcal{V}_T(D,\mathcal{S})\right)=\sum_{c\in\mathcal{C}}\frac{1}{|\mathcal{S}_c|} \sum_{i\in\mathcal{I}}\frac{P_{G_T^\star}(i|c)^2 (f_c(i)-\mu_c)^2}{Q_G(i|c)},
\end{equation}
where $\mu_c=\mathbb{E}_{i\sim P_{G_T^\star}(\cdot|c)}(f_c(i))$. It is not possible to approach 0 variance with ever better choices of $Q_{{G}}$ due to the following proposition.

\begin{prop}\label{prop:variance_reduce}
	$\text{Var}\left(\mathcal{V}_{T}(D,\mathcal{S})\right)\ge \sum_{c\in\mathcal{C}}\frac{1}{|\mathcal{S}_c|}\mathbb{E}_{i\sim P_{G_T^\star}(\cdot|c)}(|f_c(i)-\mu_c|)^2$, where the equality holds if $Q_G(i|c)\propto P_{G_T^\star}(i|c)|f_c(i)-\mu_c|$.
\end{prop}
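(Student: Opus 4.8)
The plan is to reduce to a single context and apply the Cauchy--Schwarz inequality. Since both sides of the claimed bound are sums over $c\in\mathcal{C}$ with the same nonnegative weights $1/|\mathcal{S}_c|$, and since the right-hand side of the proposition is by definition $\big(\sum_{i\in\mathcal{I}} P_{G_T^\star}(i|c)\,|f_c(i)-\mu_c|\big)^2$, it suffices to prove, for each fixed $c$, that
\[
\sum_{i\in\mathcal{I}}\frac{P_{G_T^\star}(i|c)^2 (f_c(i)-\mu_c)^2}{Q_G(i|c)} \;\ge\; \Big(\sum_{i\in\mathcal{I}} P_{G_T^\star}(i|c)\,|f_c(i)-\mu_c|\Big)^2,
\]
starting from the variance expression displayed immediately before the proposition.

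First I would set $a_i = P_{G_T^\star}(i|c)\,|f_c(i)-\mu_c| / \sqrt{Q_G(i|c)}$ and $b_i = \sqrt{Q_G(i|c)}$, so that $a_i b_i = P_{G_T^\star}(i|c)\,|f_c(i)-\mu_c|$, that $\sum_i a_i^2$ is exactly the per-context summand appearing in $\text{Var}(\mathcal{V}_T(D,\mathcal{S}))$, and that $\sum_i b_i^2 = \sum_i Q_G(i|c) = 1$ because $Q_G(\cdot|c)$ is a probability distribution. The Cauchy--Schwarz inequality $\big(\sum_i a_i b_i\big)^2 \le \big(\sum_i a_i^2\big)\big(\sum_i b_i^2\big)$ then yields the per-context bound at once, and summing over $c$ weighted by $1/|\mathcal{S}_c|$ gives the proposition.

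For the equality condition I would invoke the standard characterization that Cauchy--Schwarz is tight precisely when $a$ and $b$ are proportional: $P_{G_T^\star}(i|c)\,|f_c(i)-\mu_c| / \sqrt{Q_G(i|c)} = \lambda\,\sqrt{Q_G(i|c)}$ for all $i$ and some $\lambda \ge 0$, which rearranges to $Q_G(i|c) \propto P_{G_T^\star}(i|c)\,|f_c(i)-\mu_c|$, exactly the stated optimal proposal. Since the global bound is a nonnegatively weighted sum of the per-context ones, equality holds iff this proportionality holds in every context.

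There is no real obstacle here; the only points requiring care are bookkeeping ones. One must keep the absolute values so that the proportionality constant $\lambda$ can be taken nonnegative and $Q_G(\cdot|c)$ remains a valid, normalizable distribution; and one should note that any item with $Q_G(i|c)=0$ must then have $P_{G_T^\star}(i|c)|f_c(i)-\mu_c|=0$ for the variance to be finite, which is consistent with the proportionality. It is also worth remarking that this is simply the textbook derivation of the variance-optimal importance-sampling proposal, here specialized to the self-normalized estimator whose variance was obtained above via the delta method.
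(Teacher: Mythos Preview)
Your proposal is correct and is essentially the same argument as the paper's: both reduce per context and apply Cauchy--Schwarz, with the paper phrasing it as $E(X^2)E(Y^2)\ge |E(XY)|^2$ under $Q_G(\cdot|c)$ with $Y\equiv 1$, which is exactly your choice $a_i=P_{G_T^\star}(i|c)\,|f_c(i)-\mu_c|/\sqrt{Q_G(i|c)}$, $b_i=\sqrt{Q_G(i|c)}$. The only cosmetic difference is that the paper verifies equality by plugging in the proposed $Q_G$ directly, whereas you invoke the standard equality condition of Cauchy--Schwarz.
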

The proof is provided in the appendix. The result can provide guidance for designing better learning algorithms of $Q_{{G}}$. 

In addition to these properties, this new objective function is also quite a generic framework to integrate any negative samplers with recommendation algorithms, which can subsume several existing algorithms when the generator is degenerated to the uniform distribution. 
\begin{prop}\label{coro:relation}
	If $\forall c$, $\mathcal{S}_c$ drawn i.i.d from $\text{uniform}(\mathcal{I})$, then $w_{cj}=\frac{\exp\left(f_c(j)/T\right)}{\sum_{i\in \mathcal{S}_c}\exp\left(f_c(i)/T\right)}$ and $\forall 0 < T_1 < T_2 < +\infty $,
	\begin{displaymath}
	\lim_{T\rightarrow +\infty}\mathcal{V}_{T}(D,\mathcal{S})<\mathcal{V}_{T_2}(D,\mathcal{S})< \mathcal{V}_{T_1}(D,\mathcal{S})< \lim_{T\rightarrow 0}\mathcal{V}_{T}(D,\mathcal{S}).
	\end{displaymath}
\end{prop}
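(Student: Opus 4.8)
The plan is to reduce the whole statement to a one-dimensional monotonicity fact about a softmax-weighted mean. First I would dispose of the formula for $w_{cj}$: under i.i.d.\ sampling from $\mathrm{uniform}(\mathcal{I})$ the unnormalized proposal $\tilde{Q}_{{G}}(j|c)$ is the same constant for every $j$, so the additive term $-\log\tilde{Q}_{{G}}(j|c)$ appearing in both exponents of $w_{cj}$ in~\eqref{eq:obj_d} factors out of the numerator and of the denominator and cancels, leaving $w_{cj}=\exp(f_c(j)/T)/\sum_{i\in\mathcal{S}_c}\exp(f_c(i)/T)$.

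Next I would isolate the $T$-dependence of $\mathcal{V}_T(D,\mathcal{S})$. The term $-\tfrac{1}{|\mathcal{I}_c|}\sum_{i\in\mathcal{I}_c}\log D(i|c)$ is $T$-free, and since $-\log(1-D(j|c))=f_c(j)$ we can write
\begin{equation}
\mathcal{V}_T(D,\mathcal{S})=C+\sum_{c\in\mathcal{C}}g_c(T),\qquad
g_c(T)=\sum_{j\in\mathcal{S}_c}w_{cj}f_c(j)=\mathbb{E}_{j\sim p_{c,T}}\!\big[f_c(j)\big],
\end{equation}
where $C$ does not depend on $T$ and $p_{c,T}(j)\propto\exp(f_c(j)/T)$ is the Gibbs/softmax distribution on $\mathcal{S}_c$ at temperature $T$. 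So it suffices to prove that each $g_c$ is (strictly) decreasing on $(0,\infty)$ and to identify its limits at $0^{+}$ and $+\infty$.

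For monotonicity I would substitute $\beta=1/T$ and recognise $g_c(1/\beta)$ as the derivative of the log-partition function $A_c(\beta)=\log\sum_{j\in\mathcal{S}_c}\exp(\beta f_c(j))$; the textbook exponential-family identity $A_c''(\beta)=\mathrm{Var}_{p_{c,\beta}}(f_c)\ge 0$ then shows $g_c$ is non-increasing in $T$, and strictly decreasing as soon as $f_c$ is non-constant on $\mathcal{S}_c$ (equivalently $g_{\phi}(c,\cdot)$ takes at least two values on $\mathcal{S}_c$), since then the variance is strictly positive. For the boundary behaviour I would simply take limits in the softmax: as $T\to 0^{+}$ the mass concentrates on $\arg\max_{j\in\mathcal{S}_c}f_c(j)$, so $g_c(T)\to\max_{j\in\mathcal{S}_c}f_c(j)$; as $T\to\infty$ the weights flatten to uniform, so $g_c(T)\to\tfrac1{|\mathcal{S}_c|}\sum_{j\in\mathcal{S}_c}f_c(j)$; both limits are finite because $\mathcal{S}_c$ is finite. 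Combining, for $0<T_1<T_2<\infty$ one gets $\lim_{T\to\infty}g_c(T)<g_c(T_2)<g_c(T_1)<\lim_{T\to 0^{+}}g_c(T)$ for contexts with distinct sampled scores (and the weak versions otherwise); summing over $c\in\mathcal{C}$ and adding $C$ yields exactly the claimed four-term chain for $\mathcal{V}_T$.

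The only point that needs care is the strictness of the inequalities: it fails precisely in the degenerate case where, in every context, all sampled scores $\{f_c(j):j\in\mathcal{S}_c\}$ coincide, so I would phrase the result with the mild, generically satisfied nondegeneracy hypothesis that $\mathcal{S}_c$ contains items of distinct score for at least one $c$ — without it only the weak inequalities survive. Everything else is routine: differentiating under the finite sum defining $A_c$ and interchanging the $T$-limits with the sums are immediate, as these are finite sums of real-analytic functions of $\beta$ on $(0,\infty)$, so I do not anticipate any real obstacle beyond flagging this assumption.
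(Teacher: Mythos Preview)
Your proposal is correct and reaches the same conclusion as the paper, but the monotonicity step is argued differently. The paper differentiates $\sum_{j\in\mathcal{S}_c}w_{cj}f_c(j)$ directly in $T$ via the quotient rule, obtaining a double sum $\sum_{i,j}\exp((f_i+f_j)/T)(f_if_j-f_j^2)/T^2$, and then symmetrizes by pairing $(i,j)$ with $(j,i)$ to get $-\exp((f_i+f_j)/T)(f_i-f_j)^2/T^2\le 0$. You instead substitute $\beta=1/T$, recognise $g_c(1/\beta)=A_c'(\beta)$ for the log-partition $A_c(\beta)=\log\sum_j\exp(\beta f_c(j))$, and invoke the exponential-family identity $A_c''(\beta)=\mathrm{Var}_{p_{c,\beta}}(f_c)\ge 0$. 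These are two packagings of the same covariance computation; your route is more conceptual and makes the connection to Gibbs measures transparent, while the paper's is entirely self-contained and requires no named identity. You are also more careful about strictness: the paper asserts each paired term is strictly negative, which tacitly assumes the sampled scores in $\mathcal{S}_c$ are not all equal, whereas you flag this nondegeneracy hypothesis explicitly and identify the endpoint limits.
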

The proof is provided in the appendix. Note that $\lim_{T\rightarrow 0}\mathcal{V}_{T}(D,\mathcal{S})$ corresponds to logit-loss based recommender with dynamic negative sampling~\cite{zhang2013optimizing}, and $\lim_{T\rightarrow +\infty}\mathcal{V}_{T}(D,\mathcal{S})$ corresponds to logit-loss based recommender with uniform sampling~\cite{rendle2009bpr,liu2018dynamic}. The recently proposed self-adversarial loss~\cite{sun2019rotate} is also a special case, by simply preventing gradient propagating through sample importance.

\subsection{Sampling-Decomposable Generator}
In most GAN-style recommendation algorithms, though the generator is of limit capacity, sampling from the generator is still time-consuming due to the requirement of on-the-fly probability computation. To this end, we propose a sampling-decomposable generator, where sampling from the generator is decomposed into  two steps. The first step is to sample from a probability distribution over latent states conditioned on contexts, and the second step is to sample from a probability distribution over $M$ candidate items conditioned on latent states. Formally, assuming $\bm{x}_c$ defines the probability distribution over $K$ latent states and $\bm{y}_{\cdot  k}$ defines the probability distribution over candidate items, then the generator is decomposed as
\begin{equation}
Q_G(\cdot |c) = \sum_{k=0}^{K-1} x_{c,k} {\bm{y}}_{\cdot k} = \bm{Y}\bm{x}_c,
\end{equation}
where $\bm{Y}=[\bm{y}_{\cdot  0},\cdots,\bm{y}_{\cdot  K-1}]$ stacks $\bm{y}_{\cdot  k}$ by column, being subject to $\bm{1}_M^\top\bm{Y}  = \bm{1}_K$. Therefore, it is easy to verify that such a sampling-decomposable generator satisfies $\sum_{i\in\mathcal{I}}Q_{{G}}(i|c)=1$. The probability decomposablity has been widely used in probabilistic graphical models, such as PLSA~\cite{hofmann1999probabilistic} and LDA~\cite{blei2003latent}. To draw samples from the generator, we can first sample a latent state $k$ from the distribution $\bm{x}_c$, and then draw an item from the distribution $\bm{y}_{\cdot  k}$. Since the probability tables only occupy $O((M+N)K)$ space, we also pre-compute the alias table for $\bm{x}_c$ and $\bm{y}_{\cdot  k}$ according to the Vose-Alias method. As a consequence, due to the sampling-decomposable assumption, each item can be generated from the generator in $O(1)$, achieving remarkable speedup for item sampling.

\subsection{Optimization of Generator}
Though the sampling-decomposable assumption decreases the model capacity, the generator is not necessarily optimized by the REINFORCE algorithm any more, so that better training efficacy may be achieved. Following IRGAN and trying to reduce the variance of the objective estimator $\mathcal{V}_T(D,\mathcal{S})$ according to Proposition~\ref{prop:variance_reduce}, we propose the follow optimization problem to learning the generator,
\begin{equation}\label{eq:obj_g}
\max_{\bm{X}\ge 0, \bm{Y}\ge 0}\sum_{c\in\mathcal{C}}\sum_{i\in\mathcal{I}}\bm{x}_c^{\top}\bm{y}_i P_{G_T^\star}(i|c)|f_c(i)-\mu_c|, \text{ s.t. } \bm{X1}_K=\bm{1}_N \text{ and } \bm{1}_M^\top\bm{Y}  = \bm{1}_K,
\end{equation}
where $\bm{X}=[\bm{x}_{0},\cdots,\bm{x}_{N-1}]^\top$ stacks $\bm{x}_c$ by row. To solve this problem, alternating optimization can be applied, which iteratively update $\bm{X}$ and $\bm{Y}$ until convergence. In particular, when $\bm{Y}$ fixed, the optimization problem w.r.t. $\bm{x}_c$ is then formulated as
\begin{equation}\label{eq:obj_xc}
\max_{\bm{x}_c\in \mathcal{A}_K}\bm{x}_c^\top\sum_{i\in\mathcal{I}}\bm{y}_i P_{G_T^\star}(i|c)|f_c(i)-\mu_c| + \lambda_X \mathcal{H}(\bm{x}_c),
\end{equation}
where $\mathcal{A}_K=\{\bm{a}\in\mathbb{R}_+^K|\bm{a}^\top\bm{1}_K=1\}$ the $(K-1)$-simplex and $\mathcal{H}(\bm{x}_c)=-\sum_k x_{c,k}\log x_{c,k}$ is an entropy regularizer, to prevent $\bm{x}_c$ from degenerating to the one-hot distribution. Based on Theorem~\ref{thm:irgan_entropy}, the solution for the problem~\eqref{eq:obj_xc} is derived as follows.
\begin{corollary}
	Let $\bm{b}_c=\sum_{i\in\mathcal{I}}\bm{y}_i P_{G_T^\star}(i|c)|f_c(i)-\mu_c|$. The objective function in Eq~\eqref{eq:obj_xc} achieves the optimum when 
	\begin{equation}\label{eq:x_u}
	\bm{x}_c= \text{softmax}(\bm{b}_c/\lambda_X).
	\end{equation} 
\end{corollary}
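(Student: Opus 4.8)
The plan is to recognize the problem in \eqref{eq:obj_xc} as a scalar instance of the entropy-regularized linear objective over a simplex that Theorem~\ref{thm:irgan_entropy} already solves, and to fall back on a direct Lagrangian computation if the reduction needs spelling out. Writing the objective of \eqref{eq:obj_xc} as $h(\bm x_c)=\bm b_c^\top\bm x_c+\lambda_X\mathcal{H}(\bm x_c)$ over the $(K-1)$-simplex $\mathcal{A}_K$, this has the same shape as the generator subproblem in Theorem~\ref{thm:irgan_entropy} once one identifies the $M$ candidate items there with the $K$ latent states here, the per-item term $f_c(i)$ with the coordinate $b_{c,k}$, and the temperature $T$ with $\lambda_X$; the theorem then gives $\bm x_c\propto\exp(\bm b_c/\lambda_X)$, i.e.\ Eq.~\eqref{eq:x_u}. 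The one bookkeeping point is orientation: \eqref{eq:obj_xc} is a maximization with $+\lambda_X\mathcal{H}$, so I would first rewrite it as $-\min_{\bm x_c\in\mathcal{A}_K}\bigl(-\bm b_c^\top\bm x_c-\lambda_X\mathcal{H}(\bm x_c)\bigr)$ to line it up exactly with the $\min_G\mathcal{J}-T\mathcal{H}$ form before quoting the theorem.

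For a self-contained argument I would instead note that $h$ is strictly concave on $\mathcal{A}_K$ (the linear term is concave and, since $\lambda_X>0$, $\lambda_X\mathcal{H}$ is strictly concave because $t\mapsto t\log t$ is strictly convex), so $h$ has a unique maximizer and any interior stationary point of the Lagrangian is that global maximizer. Introducing a multiplier $\nu$ for $\bm 1_K^\top\bm x_c=1$, stationarity of $\mathcal{L}(\bm x_c,\nu)=\bm b_c^\top\bm x_c+\lambda_X\mathcal{H}(\bm x_c)+\nu(1-\bm 1_K^\top\bm x_c)$ forces $b_{c,k}-\lambda_X(\log x_{c,k}+1)-\nu=0$ for every $k$, hence $x_{c,k}=\exp\!\bigl((b_{c,k}-\nu)/\lambda_X-1\bigr)\propto\exp(b_{c,k}/\lambda_X)$; normalizing via $\bm 1_K^\top\bm x_c=1$ pins down $\nu$ and yields $\bm x_c=\mathrm{softmax}(\bm b_c/\lambda_X)$.

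The differentiation and the normalization are routine; the two points that actually need a sentence are (i) that the candidate $\mathrm{softmax}(\bm b_c/\lambda_X)$ lies in the relative interior of $\mathcal{A}_K$ — which it does, as the softmax is strictly positive, and moreover the entropy gradient $-\log x_{c,k}-1\to+\infty$ as $x_{c,k}\to 0^+$ so the maximizer cannot touch the boundary — so the constraints $\bm x_c\ge 0$ are inactive and may be dropped from the KKT system; and (ii) that concavity upgrades the stationary point from a mere critical point to the global optimum, which is what licenses the ``achieves the optimum'' phrasing. I expect the orientation/sign matching against Theorem~\ref{thm:irgan_entropy} to be the easiest place to slip, so I would keep the explicit $-\min(-\,\cdots)$ rewrite rather than invoke the theorem informally.
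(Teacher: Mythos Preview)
Your proposal is correct and follows the same approach as the paper, which simply states that the corollary is ``based on Theorem~\ref{thm:irgan_entropy}'' without further detail. Your reduction to the form of that theorem (identifying $K$ latent states with the $M$ items there, $b_{c,k}$ with $f_c(i)$, $\lambda_X$ with $T$, and handling the $\max$-vs-$\min$ orientation) is exactly the intended argument, and your self-contained Lagrangian derivation is just a recapitulation of the proof of Theorem~\ref{thm:irgan_entropy} specialized to this setting.
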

To obtain $\bm{x}_c$, it is required to first calculate these quantities $\mu_c$ and $\bm{b}_c$. It is infeasible to straightforwardly compute them since they involve summation over $M$ items. Therefore, we again resort to self-normalized importance sampling to perform approximate estimation of these two expectations. To reduce bias of the estimators, we draw different sample sets from the most recent proposal $Q_G(i|c)=\bm{x}_c^\top \bm{y}_i$ for separate use.

When $\bm{X}$ fixed, the optimization problem w.r.t. $\bm{y}_{\cdot k}$, the $k$-th column of $\bm{Y}$, is then formulated as:
\begin{equation}\label{eq:obj_yk}
\max_{\bm{y}_{\cdot k}\in \mathcal{A}_M}\sum_{i\in \mathcal{I}}y_{i,k}\sum_{c\in \mathcal{C}}x_{c,k}P_{G_T^\star}(i|c)|f_c(i)-\mu_c| + \lambda_Y \mathcal{H}(\bm{y}_{\cdot k}).
\end{equation}
where $\mathcal{H}(\bm{y}_{\cdot k})$ is an entropy regularizer, to prevent $\bm{y}_{\cdot k}$ from degenerating to the one-hot distribution. Following the Theorem~\ref{thm:irgan_entropy}, we can derive the solution of the problem as follows.
\begin{corollary}
	Let ${d}_{k,i}=\sum_{c\in\mathcal{C}}{x}_{c,k}P_{G_T^\star}(i|c)|f_c(i)-\mu_c|$. The objective function in Eq~\eqref{eq:obj_yk} achieves the optimum when 
	\begin{equation}\label{eq:y_k}
	\bm{y}_{\cdot k}= \text{softmax}(\bm{d}_{k}/\lambda_Y).
	\end{equation} 
\end{corollary}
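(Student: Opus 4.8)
The plan is to observe that, once the definition $d_{k,i}=\sum_{c\in\mathcal{C}}x_{c,k}P_{G_T^\star}(i|c)|f_c(i)-\mu_c|$ is substituted, the objective in Eq.~\eqref{eq:obj_yk} is exactly an entropy-regularized linear program over the probability simplex, namely $\max_{\bm{y}_{\cdot k}\in\mathcal{A}_M}\sum_{i\in\mathcal{I}}y_{i,k}\,d_{k,i}+\lambda_Y\mathcal{H}(\bm{y}_{\cdot k})$. This has precisely the same structure as the inner maximization over $P_G(\cdot|c)$ analyzed in Theorem~\ref{thm:irgan_entropy}, with the role of $f_c(\cdot)/T$ played by $\bm{d}_k/\lambda_Y$ and the summation index ranging over $\mathcal{I}$. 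Hence the corollary follows either by invoking Theorem~\ref{thm:irgan_entropy} under this identification or by repeating the short Lagrangian computation; I would include the latter to keep the argument self-contained.

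Concretely, I would introduce a Lagrange multiplier $\eta$ for the single equality constraint $\bm{1}_M^\top\bm{y}_{\cdot k}=1$, temporarily drop the inequality constraints $\bm{y}_{\cdot k}\ge\bm{0}$, and form $\mathcal{L}(\bm{y}_{\cdot k},\eta)=\sum_{i\in\mathcal{I}}y_{i,k}d_{k,i}-\lambda_Y\sum_{i\in\mathcal{I}}y_{i,k}\log y_{i,k}-\eta\big(\sum_{i\in\mathcal{I}}y_{i,k}-1\big)$. Setting $\partial\mathcal{L}/\partial y_{i,k}=0$ gives $d_{k,i}-\lambda_Y(\log y_{i,k}+1)-\eta=0$, i.e. $y_{i,k}\propto\exp(d_{k,i}/\lambda_Y)$; imposing the equality constraint to fix the normalizing constant yields $\bm{y}_{\cdot k}=\text{softmax}(\bm{d}_k/\lambda_Y)$, which is exactly Eq.~\eqref{eq:y_k}.

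I would then close the argument by verifying this stationary point is the constrained global maximizer: (i) every coordinate of $\text{softmax}(\bm{d}_k/\lambda_Y)$ is strictly positive, so the dropped constraints $\bm{y}_{\cdot k}\ge\bm{0}$ are inactive and their KKT multipliers vanish, so the full KKT system holds; and (ii) the objective is strictly concave on $\mathcal{A}_M$, since the linear term contributes nothing to the Hessian while $\mathcal{H}$ is strictly concave there for $\lambda_Y>0$, so the unique interior stationary point is the unique global maximum. The analogous statement for $\bm{x}_c$ in the preceding corollary is the identical computation with $\bm{d}_k,\lambda_Y$ replaced by $\bm{b}_c,\lambda_X$.

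I do not anticipate a genuine obstacle here; the only points requiring care are keeping the KKT bookkeeping clean (confirming the non-negativity multipliers are zero at the interior solution) and noting that the sign of $d_{k,i}$ plays no role in the derivation, even though $d_{k,i}\ge 0$ holds because $x_{c,k}\ge 0$, $P_{G_T^\star}(i|c)\ge 0$ and $|f_c(i)-\mu_c|\ge 0$.
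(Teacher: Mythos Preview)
Your proposal is correct and matches the paper's approach: the paper does not give a standalone proof of this corollary but simply states that it follows from Theorem~\ref{thm:irgan_entropy}, and your identification of $\bm{d}_k/\lambda_Y$ with $f_c(\cdot)/T$ is exactly that reduction. The self-contained Lagrangian computation you outline is essentially a repeat of the paper's appendix proof of Theorem~\ref{thm:irgan_entropy} (which uses the full KKT system including the nonnegativity multipliers), so your argument is a slightly streamlined but equivalent version.
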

To obtain $\bm{y}_{\cdot k}$, it is required to calculate $\mu_c$ and $\bm{d}_k$. After $\bm{x}_c$ is updated, $\mu_c$ is re-estimated by drawing a new sample set from the updated proposal distribution. To estimate $\bm{d}_{k}$, we first use self-normalized importance sampling to estimate the normalize constant ${Z}_{G_T^\star}(c)$ of $P_{G_T^\star}(i|c)$. Then we use the probability $Q_G(c|i) =\sum_k P(k|i) P(c|k)$ for context sampling, where $P(k|i) = \frac{y_{i,k}}{\sum_k y_{i,k}}$ and $P(c|k) = \frac{x_{c,k}}{\sum_c x_{c,k}}$ by assuming the prior $P(k)=\frac{1}{K}$ and $P(c) = \frac{1}{N}$. With the Vose-Alias method, we draw a sample set of contexts $\mathcal{S}_i$ for approximating ${d}_{k,i}$ as follows:
\begin{equation}\label{eq:s_k}
{d}_{k,i}=\mathbb{E}_{c\sim Q_G(c|i)}\frac{P_{G_T^\star}(i|c)}{Q_G(c|i)}{x}_{c,k}|f_c(i)-\mu_c|
\approx \frac{1}{|\mathcal{S}_i|}\sum_{c\in \mathcal{S}_i}\hat{w}_{ic} {x}_{c,k}|f_c(i)-\mu_c|.
\end{equation}
where $\hat{w}_{ic}=\exp\left(f_c(i)/T - \log Q_G(c|i) - \log \tilde{Z}_{G_T^\star}(c)\right)$.

Algorithm \ref{alg:SD-GAR} shows the overall procedure of iteratively updating the discriminator and the generator, where the parameters of the generator are randomly initialized.

\subsection{Time Complexity Analysis}
Thanks to our sampling-decomposable generator and Vose-Alias method sampling techniques, generating an item from the generator only requires $O(1)$. Therefore, if $|\mathcal{S}_c|$ is a multiplier of $|\mathcal{I}_c|$, the time complexity of training the discriminator is still linearly proportional to the data size. When learning parameters $\bm{X}$ and $\bm{Y}$ of the generator, the time complexity of the proposed algorithm is $O(NK|\mathcal{S}_c|+MK|\mathcal{S}_i|)$, where $|\mathcal{S}_c|$ is the size of the item sample set for approximation and $|\mathcal{S}_i|$ is the size of the context sample set for approximation. The value of $|\mathcal{S}_c|$ and $|\mathcal{S}_i|$ is usually very small compared to the number of items $M$ and the number of contexts $N$. Therefore, training the generator is very efficient. Moreover, we empirically find that lowering the frequency of updating the generator would almost not affect the overall recommendation performance, so we choose to update the generator every $l_g$ iterations. %The setting further reduces the overall computation cost of the proposed algorithm.

\noindent \textbf{Comparison with IRGAN}. SD-GAR is similar to IRGAN. However, according to \cite{wang2017irgan}, the time complexity of training IRGAN is $O(NMK)$, where $O(K)$ indicates the time cost of computing preference score for each item. Therefore, the proposed algorithm is much more efficient than IRGAN.

\begin{algorithm}[t]
	\label{alg:SD-GAR}
	\caption{Sampling-Decomposable Generative Adversarial Recommender}
	\LinesNumbered
	\KwIn{Context set $\mathcal{C}$; interacted items set $\mathcal{I}_c$; sample sets $\mathcal{S}_c$, $\mathcal{S}_i$; temperature $T,\lambda_X,\lambda_Y$; total iterations $L$; update frequency $l_g$ of the generator}
	\KwOut{Parameters $\bm{\Theta}$ of the discriminator}
	$\bm{X}\sim U(0,1)$\;
	$\bm{Y}\sim U(0,1)$\;
	\ForEach{$c\in\mathcal{C}$}{
		$P(k|c)\leftarrow$ AliasTable($\bm{x}_c/\bm{x}_c^\top\bm{1}_K$)\tcp*{$O(K)$}
	}
	\For{$k=0...K-1$}{
		$P(i|k)\leftarrow$ AliasTable($\bm{y}_{\cdot k}/\bm{y}_{\cdot k}^\top\bm{1}_M$)\tcp*{$O(M)$}
	}
	\For{l=1...L}{
		\ForEach{ interacted item set $\mathcal{I}_c$}{
			draw item sample set $\mathcal{S}_c$ as negatives based on $P(k|c)$ and $P(i|k)$\tcp*{$O(|\mathcal{S}_c|)$}
			update parameters $\bm{\Theta}$ of the discriminator by minimizing Eq~\eqref{eq:obj_d}\;
		}
		\If{l \% $l_g$ == 0 }{
			\ForEach{$c\in \mathcal{C}$ }{
				draw item sample set $\mathcal{S}_c$ based on $P(k|c)$ and $P(i|k)$\tcp*{$O(|\mathcal{S}_c|)$}
				compute $\bm{x}_c$ using Eq~\eqref{eq:x_u}\;
				$P(k|c)\leftarrow$ AliasTable($\bm{x}_c$)\tcp*{$O(K)$}
			}
			\For{$k=0...K-1$ }{
				$P(c|k)\leftarrow$ AliasTable($\bm{x}_{\cdot k}/\bm{x}_{\cdot k}^\top\bm{1}_N$)\tcp*{$O(N)$}
				$P(k|i)\leftarrow$ AliasTable($\bm{y}_i/\bm{y}_i^\top\bm{1}_K$)\tcp*{$O(K)$}
				draw context sample set $\mathcal{S}_i$ for each item based on $P(c|k)$\tcp*{$O(|\mathcal{S}_i|)$}
				compute $\bm{y}_{\cdot k}$ using Eq~\eqref{eq:y_k}\;
				$P(i|k)\leftarrow$ AliasTable($\bm{y}_{\cdot k}$)\tcp*{$O(M)$}
			}
		}
	}
	return $\bm{\Theta}$
\end{algorithm}

\section{Experiments}
We first compare our proposed SD-GAR with a set of baselines, including the state-of-the-art model. Then, we show the efficiency and scalability of SD-GAR from two perspectives.

%\vspace{-5pt}
\subsection{Datasets}
%\vspace{-3pt}
As shown in Table~\ref{tab:datasets}, five publicly available real-world datasets~\footnote{\textbf{Amazon}: http://jmcauley.ucsd.edu/data/amazon; \textbf{MovieLens}: https://grouplens.org/datasets/movielens; \textbf{CiteULike}: https://github.com/js05212/citeulike-t; \textbf{Gowalla}: http://snap.stanford.edu/data/loc-gowalla.html; \textbf{Echonest}: https://blog.echonest.com/post/3639160982/million-song-dataset} are used for evaluating the proposed algorithm. The datasets vary in difficulty of item recommendation, which may be indicated by the numbers of items and ratings, the density and concentration. The Amazon dataset is a subset of customers’ ratings for Amazon books and  the MovieLens dataset is from the classic MovieLens10M dataset. For Amazon and MovieLens10M, we treat items with scores higher than 4 as positive. The CiteULike dataset collects users’personalized libraries of articles, the Gowalla dataset includes users’ check-ins at locations, and the Echonest dataset records users’ play count of songs. To ensure each user can be tested, we filter these datasets such that users rated at least 5 items. For each user, we randomly sample her 80\% ratings into a training set and the rest 20\% into a testing test. 10\% ratings of the training set are used for validation. We build recommendation models on the training set and evaluate them on the test set.
\begin{table}[t]
	\caption{Dataset Statistics. Concentration indicates rating percentage on top 5\% most popular items.}
	\label{tab:datasets}
	\centering
	\begin{tabular}{lclclcl}
		\toprule
		& \#users & \#items& \#ratings&density& concentration\\
		\midrule
		CiteULike& 7,947 & 25,975& 134,860 & 6.53e-04& 31.56\%\\
		Gowalla & 29,858 &40,988& 1,027,464& 8.40e-04&29.15\%\\
		Amazon & 130,380 &128,939& 2,415,650&  1.44e-04&32.98\%\\
		MovieLens & 60,655 &8,939& 2,105,971& 3.88e-03& 61.98\%\\
		Echonest & 217,966 &60,654& 4,422,471& 3.35e-04& 45.63\%\\
		\bottomrule
	\end{tabular}
	%\vspace{-10pt}
\end{table}
%\vspace{-5pt}
\subsection{Experimental Setup}
%\vspace{-3pt}
In this paper, our proposed SD-GAR is implemented based on Tensorflow and trained with the Adam algorithm on a linux system (2.10GHz Intel Xeon Gold 6230 CPUs and a Tesla V100 GPU). In addition, there are some hyper-parameters. Note that the discriminator can be any existing model. In this paper, we choose to use the matrix factorization model (i.e., $D(\cdot|c)=\sigma(\bm{Y}\bm{x}_c+\bm{b})$ where $\bm{b}$ is the bias of items) since it is simple and even superior to neural recommendation algorithms in 
some cases~\cite{rendle2020neural}. Unless otherwise specified, the dimension of user and item embeddings is set to 32. The batch size is fixed to 512 and the learning rate is fixed to 0.001. We impose L2 regularization to prevent overfitting and its coefficient is tuned over $\{0.01, 0.03, 0.05\}$ on the validation set. The number of item sample set for learning the discriminator is set to 5. The number of item and context sample set for learning the generator is set to 64. The temperature $T,\lambda_X,\lambda_Y$ is tuned over $\{0.1,0.5,1\}$. The sensitivity of some important parameters is discussed in the appendix.

The performance of recommendation is assessed by how well positive items on the test set are ranked. We exploit the widely-used metric NDCG for ranking evaluation. NDCG at a cutoff k, denoted as NDCG@k, rewards method that ranks positive items in the first few positions of the top-k ranking list. The positive items ranked at low positions of ranking list contribute less than positive items at the top positions. The cutoff k in NDCG is set to 50 by default.
%\vspace{-5pt}
\subsection{Baselines}
%\vspace{-3pt}
To validate the effectiveness of SD-GAR, we compare it against seven popular methods. %\textbf{BPR}~\cite{rendle2009bpr} is a pioneer work of personalized ranking in recommender systems. It uses pairwise ranking loss and randomly samples negative items. \textbf{AOBPR}~\cite{rendle2014improving} improves BPR with adaptive oversampling. \textbf{WARP}~\cite{weston2010large} uses the weighted approximate-rank pairwise loss function for collaborative filtering whose loss is based on a ranking error function. \textbf{CML}~\cite{hsieh2017collaborative} learns a joint metric space to encode users’ preference as well as user-user and item-item similarity and uses WARP loss function for optimization. Following the original paper, the margin size is tuned over $\{0.5, 1.0, 1.5, 2.0\}$. \textbf{DNS}~\cite{zhang2013optimizing} draws a set of negative samples from a uniform distribution but only leaves one item with the highest predicted score to update the model. \textbf{IRGAN}~\cite{wang2017irgan} is a state-of-the-art GAN based model including a generative network that generates items for a user and a discriminative network that determines whether the instance is from real data or generated. \textbf{SA}~\cite{sun2019rotate} is a special case of the proposed method where the generator is replaced by a uniform distribution and the gradient from sample importance is forbidden. For the sake of fairness, hyperparameters of these competitors (e.g., the embedding size and the number of negative samples) are set to the same as SD-GAR.  
\begin{itemize}[leftmargin=*]
	\item \textbf{BPR}~\cite{rendle2009bpr} is a pioneer work of personalized ranking in recommender systems. It uses pairwise ranking loss and randomly samples negative items.
	\item \textbf{AOBPR}~\cite{rendle2014improving} improves BPR with adaptive oversampling. We use the implementation in LibRec~\footnote{https://github.com/guoguibing/librec}.
	\item \textbf{WARP}~\cite{weston2010large} uses the weighted approximate-rank pairwise loss function for collaborative filtering whose loss is based on a ranking error function. We use the implementation in LightFM~\footnote{https://github.com/lyst/lightfm}.
	\item \textbf{CML}~\cite{hsieh2017collaborative} learns a joint metric space to encode users’ preference as well as user-user and item-item similarity and uses WARP loss function for optimization. We use the authors' released code~\footnote{https://github.com/changun/CollMetric}. Following the original paper, the margin size is tuned over $\{0.5, 1.0, 1.5, 2.0\}$.
	\item \textbf{DNS}~\cite{zhang2013optimizing} draws a set of negative samples from a uniform distribution but only leaves one item with the highest predicted score to update the model. 
	\item \textbf{IRGAN}~\cite{wang2017irgan} is a state-of-the-art GAN based model including a generative network that generates items for a user and a discriminative network that determines whether the instance is from real data or generated. We use the authors' released code~\footnote{https://github.com/geek-ai/irgan}.
	\item \textbf{SA}~\cite{sun2019rotate} is a special case of the proposed method where the generator is replaced by a uniform distribution and the gradient from sample importance is forbidden.
\end{itemize}
For the sake of fairness, hyperparameters of these competitors (e.g., the embedding size and the number of negative samples) are set to the same as SD-GAR.  

%\vspace{-5pt}
\subsection{Experimental Results}
%\vspace{-3pt}
\begin{table}
	\caption{Comparison with baselines on five datasets with respect to NDCG@50}
	\label{tab:overall_performance}
	\centering
	\small
	\setlength{\tabcolsep}{3.8pt}
	\begin{tabular}{ccccccccccc|c}
		\toprule
		&\multicolumn{2}{c}{CiteULike}& \multicolumn{2}{c}{Gowalla}& \multicolumn{2}{c}{MovieLens}& \multicolumn{2}{c}{Amazon}& \multicolumn{2}{c}{Echonest}&\\
		\cmidrule(lr){2-3} \cmidrule(lr){4-5} \cmidrule(lr){6-7} \cmidrule(lr){8-9} \cmidrule(lr){10-11}
		& \scriptsize{NDCG@50} & \scriptsize{imp\%}& \scriptsize{NDCG@50} & \scriptsize{imp\%}& \scriptsize{NDCG@50} & \scriptsize{imp\%}& \scriptsize{NDCG@50} & \scriptsize{imp\%}& \scriptsize{NDCG@50} & \scriptsize{imp\%} & \scriptsize{AvgImp}\\
		\midrule
		BPR & 0.1165 &17.2& 0.1255& 29.1& 0.2605&20.3 & 0.0387& 78.3& 0.0882&43.7 & 37.7\\
		AOBPR & 0.0977 &39.8& 0.1349& 20.1& 0.2545&23.1 & 0.0573& 20.6& 0.1038&22.0&25.1\\
		WARP & 0.0948 &44.0& 0.1397& 16.0& 0.2546& 23.1& 0.0553& 24.9& 0.1116&13.5&24.3\\
		CML & 0.1194 &14.4& 0.1303& 24.3& 0.2737& 14.5& 0.0572& 20.9& 0.1035&22.4&19.3\\
		DNS & 0.1157 &18.0& 0.1412& 14.7& 0.2693& 16.4& 0.0580&19.0 & 0.1013&25.0&18.6\\
		IRGAN & 0.1174 &16.2& 0.1443&12.3 & 0.2858&9.7 & 0.0627&10.2 & 0.1114&13.7&12.4\\
		SA & 0.1269 &7.6& 0.1490& 8.7& 0.2764& 13.4& 0.0619& 11.7& 0.1138&11.3&10.5\\
		SD-GAR & \textbf{0.1365} &-& \textbf{0.1620}& -& \textbf{0.3134}& -& \textbf{0.0691}& -& \textbf{0.1267}&-&-\\
		\bottomrule
	\end{tabular}
	%\vspace{-10pt}
\end{table}
\paragraph{Overall Performance.}
In this experiment, we show the comparison results between SD-GAR and the baselines in Table~\ref{tab:overall_performance}.  In addition to the comparative results, we also analyze some potential limits and effective mechanism of the baselines.
\begin{itemize}[leftmargin=*]
	\item SD-GAR consistently outperforms all baselines on all five datasets. On average, the relative performance improvements are at least 10.5\%. The relative improvements to the classic method BPR reach 37.7\% on average. This fully validates the effectiveness of SD-GAR.
	\item Among the baselines, AOBPR is a method that approximates the rank of items to sampling process while DNS, IRGAN, SA utilize the predicted scores for this goal. According to the experimental results, we can find that AOBPR is not competitive in most cases. This may be because applying predicted scores can distinguish the importance of items at a finer granularity and lead to better results. WARP and CML are two methods utilizing rejection sampling from a uniform distribution. After they are better trained, it is hard to draw informative samples which limit their performance.
	\item We find that weighting negative samples with predicted scores can bring much improvements. This is based on the observation that SA and DNS outperform BPR. In particular, SA assigns higher weights to items with larger predicted scores, while DNS only leaves the negative item with the highest score. In addition, the fact that SA is superior than DNS also implies introducing more negative samples can improve coverage and performance.
	\item IRGAN is beaten by SA due to the divergence between the generator and its optimum. By compensating the divergence with self-normalized importance sampling, our proposed SD-GAR obtains 12.4\% improvements on average and achieves best performance among all baselines.
	\item Figure~\ref{fig:ndcg_trends} shows the performance trends with respect to the number of iterations comparing with three classic methods on the Gowalla dataset. We can find SD-GAR stays ahead of other baselines along with the training process and starts to converge when it comes to around 30 iterations.
\end{itemize}

\paragraph{Comparison of Time Consumption.}
Here, we illustrate the efficiency of SD-GAR. As shown in Figure~\ref{fig:time_compare}, we compare SD-GAR with IRGAN since they have similar frameworks which contain a generative network and a discriminative network. We give the results on two large-scale datasets (i.e., Amazon, Echonest). In the Figure, the blue bars represent the training time of the discriminator while the red bars represent the training time of the generator. Regarding the discriminator, the training time of SD-GAR is 20x (10x) faster on the Amazon (Echonest) dataset. Regarding the generator, the training time of SD-GAR is 5x faster on both datasets. In addition, note that our generator is optimized at a frequency of $l_g>1$ iterations so that its training time per iteration is much shorter.

Figure~\ref{fig:time_itemNum} shows the time consumption of two models with respect to the number of items. Specifically, we conduct this experiment on the Amazon dataset as it consists of more than 120K items. We random select 20K, 40K, 60K, 80K, 100K items and remove the irrelevant data. In the Figure, the dotted lines are the training time of IRGAN while the solid lines are the training time of SD-GAR. From the results, less time is spent in training the discriminator and generator of SD-GAR. In addition, the time consumption of both models grows linearly with the increasing number of items. However, the growth rate of IRGAN is much larger than that of SD-GAR. Therefore, SD-GAR is much more efficient and scalable than IRGAN.

\begin{figure}
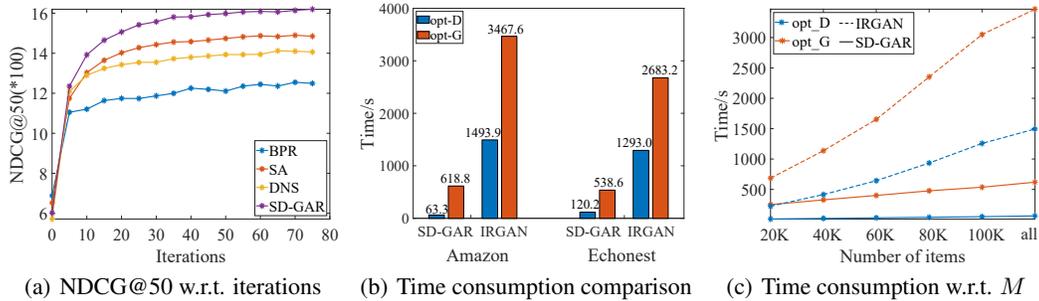

	\centering
	\subfigure[NDCG@50 w.r.t. iterations]{
		\includegraphics[scale=0.12]{figure/ndcg_trends.pdf}
		\label{fig:ndcg_trends}
	}
	\subfigure[Time consumption comparison]{
		\includegraphics[scale=0.12]{figure/time_compare.pdf}
		\label{fig:time_compare}
	}
	\subfigure[Time consumption w.r.t. $M$]{
		\includegraphics[scale=0.12]{figure/time_itemNum.pdf}
		\label{fig:time_itemNum}
	}
	\caption{(a) shows the performance trends between SD-GAR and classic beselines. (b) and (c) show the comparison of time consumption between SD-GAR and IRGAN.}
	%\vspace{-5pt}
\end{figure}

\section{Conclusion}
In this paper, we proposed a sampling-decomposable generative adversarial recommender, to address low efficiency of negative sampling with the generator and bridge the gap between the generator of limit capacity and the optimal generator with sufficient capacity. The proposed algorithm was evaluated with five real-world recommendation datasets, showing that the proposed algorithm significantly outperforms the competing baselines, including the SOTA recommender. The efficiency study showed that the training of the proposed algorithm achieves remarkable speedup. The future work includes the better design and the learning mechanism of sampling-efficient generators.
\section*{Broader Impact}
In this paper, we develop a new recommendation algorithm, which aims to efficiently solve the sparsity challenge in recommender system. The offline evaluation results on multiple datasets show that the new algorithm achieves better recommendation performance in terms of NDCG. The task does not leverage any biases in the data. As a consequence, the customers who often use recommendation services may more easily figure out their interested products, the researchers who design new recommendation algorithms may be inspired by the insight delivered in this paper, and the engineers who develop recommendation algorithms may implement the new algorithm and incorporate the new loss function and the new negative sampler in their recommendation services. Nobody would be put at disadvantage from this research. The practical recommendation service usually adopt the ensemble of many recommendation models, so any single algorithm does not lead to any serious consequences of user experiences.
\section*{Acknowledgements}
The work was supported by grants from the National Natural Science Foundation of China (No. 61976198, 62022077, 61922073 and U1605251), Municipal Program on Science and Technology Research Project of Wuhu City (No. 2019yf05), and the Fundamental Research Funds for the Central Universities.
\bibliographystyle{plainnat}
\bibliography{ref}

\begin{thebibliography}{37}
\providecommand{\natexlab}[1]{#1}
\providecommand{\url}[1]{\texttt{#1}}
\expandafter\ifx\csname urlstyle\endcsname\relax
  \providecommand{\doi}[1]{doi: #1}\else
  \providecommand{\doi}{doi: \begingroup \urlstyle{rm}\Url}\fi

\bibitem[Bayer et~al.(2017)Bayer, He, Kanagal, and Rendle]{bayer2017generic}
Immanuel Bayer, Xiangnan He, Bhargav Kanagal, and Steffen Rendle.
\newblock A generic coordinate descent framework for learning from implicit
  feedback.
\newblock In \emph{Proceedings of WWW'17}, pages 1341--1350, 2017.

\bibitem[Blei et~al.(2003)Blei, Ng, and Jordan]{blei2003latent}
David~M Blei, Andrew~Y Ng, and Michael~I Jordan.
\newblock Latent dirichlet allocation.
\newblock \emph{the Journal of machine Learning research}, 3:\penalty0
  993--1022, 2003.

\bibitem[Cai and Wang(2018)]{cai2018kbgan}
Liwei Cai and William~Yang Wang.
\newblock Kbgan: Adversarial learning for knowledge graph embeddings.
\newblock In \emph{Proceedings of the 2018 Conference of the North American
  Chapter of the Association for Computational Linguistics: Human Language
  Technologies, Volume 1 (Long Papers)}, pages 1470--1480, 2018.

\bibitem[Chae et~al.(2018)Chae, Kang, Kim, and Lee]{chae2018cfgan}
Dong-Kyu Chae, Jin-Soo Kang, Sang-Wook Kim, and Jung-Tae Lee.
\newblock Cfgan: A generic collaborative filtering framework based on
  generative adversarial networks.
\newblock In \emph{Proceedings of CIKM'18}, pages 137--146, 2018.

\bibitem[Ding et~al.(2019)Ding, Quan, He, Li, and Jin]{ding2019reinforced}
Jingtao Ding, Yuhan Quan, Xiangnan He, Yong Li, and Depeng Jin.
\newblock Reinforced negative sampling for recommendation with exposure data.
\newblock In \emph{Proceedings of IJCAI'19}, pages 2230--2236. AAAI Press,
  2019.

\bibitem[Goodfellow(2014)]{goodfellow2014distinguishability}
Ian~J Goodfellow.
\newblock On distinguishability criteria for estimating generative models.
\newblock \emph{arXiv preprint arXiv:1412.6515}, 2014.

\bibitem[He et~al.(2017)He, Liao, Zhang, Nie, Hu, and Chua]{he2017neural}
Xiangnan He, Lizi Liao, Hanwang Zhang, Liqiang Nie, Xia Hu, and Tat-Seng Chua.
\newblock Neural collaborative filtering.
\newblock In \emph{Proceedings of WWW'17}, pages 173--182, 2017.

\bibitem[He et~al.(2019)He, Tang, Du, Hong, Ren, and Chua]{he2019fast}
Xiangnan He, Jinhui Tang, Xiaoyu Du, Richang Hong, Tongwei Ren, and Tat-Seng
  Chua.
\newblock Fast matrix factorization with nonuniform weights on missing data.
\newblock \emph{IEEE transactions on neural networks and learning systems},
  2019.

\bibitem[He et~al.(2020)He, Deng, Wang, Li, Zhang, and Wang]{he2020lightgcn}
Xiangnan He, Kuan Deng, Xiang Wang, Yan Li, YongDong Zhang, and Meng Wang.
\newblock Lightgcn: Simplifying and powering graph convolution network for
  recommendation.
\newblock In \emph{Proceedings of SIGIR'20}, page 639–648. ACM, 2020.

\bibitem[Hofmann(1999)]{hofmann1999probabilistic}
Thomas Hofmann.
\newblock Probabilistic latent semantic analysis.
\newblock In \emph{Proceedings of the Fifteenth conference on Uncertainty in
  artificial intelligence}, pages 289--296, 1999.

\bibitem[Hsieh et~al.(2017)Hsieh, Yang, Cui, Lin, Belongie, and
  Estrin]{hsieh2017collaborative}
Cheng-Kang Hsieh, Longqi Yang, Yin Cui, Tsung-Yi Lin, Serge Belongie, and
  Deborah Estrin.
\newblock Collaborative metric learning.
\newblock In \emph{Proceedings of WWW'17}, pages 193--201, 2017.

\bibitem[Hu et~al.(2008)Hu, Koren, and Volinsky]{hu2008collaborative}
Y.~Hu, Y.~Koren, and C.~Volinsky.
\newblock Collaborative filtering for implicit feedback datasets.
\newblock In \emph{Proceedings of ICDM'08}, pages 263--272. IEEE, 2008.

\bibitem[Huang et~al.(2019)Huang, Liu, Zhai, Yin, Chen, Gao, and
  Hu]{huang2019exploring}
Zhenya Huang, Qi~Liu, Chengxiang Zhai, Yu~Yin, Enhong Chen, Weibo Gao, and
  Guoping Hu.
\newblock Exploring multi-objective exercise recommendations in online
  education systems.
\newblock In \emph{Proceedings of CIKM'19}, pages 1261--1270, 2019.

\bibitem[Jegou et~al.(2011)Jegou, Douze, and Schmid]{jegou2011product}
Herve Jegou, Matthijs Douze, and Cordelia Schmid.
\newblock Product quantization for nearest neighbor search.
\newblock \emph{IEEE Transactions on Pattern Analysis and Machine
  Intelligence}, 33\penalty0 (1):\penalty0 117--128, 2011.

\bibitem[Krichene et~al.(2018)Krichene, Mayoraz, Rendle, Zhang, Yi, Hong, Chi,
  and Anderson]{krichene2018efficient}
Walid Krichene, Nicolas Mayoraz, Steffen Rendle, Li~Zhang, Xinyang Yi, Lichan
  Hong, Ed~Chi, and John Anderson.
\newblock Efficient training on very large corpora via gramian estimation.
\newblock \emph{arXiv preprint arXiv:1807.07187}, 2018.

\bibitem[Lian et~al.(2018)Lian, Zhou, Zhang, Chen, Xie, and
  Sun]{lian2018xdeepfm}
Jianxun Lian, Xiaohuan Zhou, Fuzheng Zhang, Zhongxia Chen, Xing Xie, and
  Guangzhong Sun.
\newblock xdeepfm: Combining explicit and implicit feature interactions for
  recommender systems.
\newblock In \emph{Proceedings of KDD'18}, pages 1754--1763. ACM, 2018.

\bibitem[Liang et~al.(2018)Liang, Krishnan, Hoffman, and
  Jebara]{liang2018variational}
Dawen Liang, Rahul~G Krishnan, Matthew~D Hoffman, and Tony Jebara.
\newblock Variational autoencoders for collaborative filtering.
\newblock In \emph{Proceedings of WWW'18}, pages 689--698, 2018.

\bibitem[Liu et~al.(2011)Liu, Ge, Li, Chen, and Xiong]{liu2011personalized}
Q.~Liu, Y.~Ge, Z.~Li, E.~Chen, and H.~Xiong.
\newblock Personalized travel package recommendation.
\newblock In \emph{Proceedings of ICDM'11}, pages 407--416. IEEE, 2011.

\bibitem[Liu et~al.(2018)Liu, Zhao, Liu, Lu, Gao, Li, and Jin]{liu2018dynamic}
Yong Liu, Lifan Zhao, Guimei Liu, Xinyan Lu, Peng Gao, Xiao-Li Li, and Zhihui
  Jin.
\newblock Dynamic bayesian logistic matrix factorization for recommendation
  with implicit feedback.
\newblock In \emph{Proceedings of IJCAI'18}, pages 3463--3469, 2018.

\bibitem[Malkov and Yashunin(2018)]{malkov2018efficient}
Yury~A Malkov and Dmitry~A Yashunin.
\newblock Efficient and robust approximate nearest neighbor search using
  hierarchical navigable small world graphs.
\newblock \emph{IEEE transactions on pattern analysis and machine
  intelligence}, 2018.

\bibitem[Neyshabur and Srebro(2015)]{neyshabur2015symmetric}
Behnam Neyshabur and Nathan Srebro.
\newblock On symmetric and asymmetric lshs for inner product search.
\newblock In \emph{Proceedings of ICML'15}, pages 1926--1934, 2015.

\bibitem[Owen(2013)]{mcbook}
Art~B. Owen.
\newblock \emph{Monte Carlo theory, methods and examples}.
\newblock 2013.

\bibitem[Pan et~al.(2008)Pan, Zhou, Cao, Liu, Lukose, Scholz, and
  Yang]{pan2008one}
R.~Pan, Y.~Zhou, B.~Cao, N.N. Liu, R.~Lukose, M.~Scholz, and Q.~Yang.
\newblock One-class collaborative filtering.
\newblock In \emph{Proceedings of ICDM'08}, pages 502--511. IEEE, 2008.

\bibitem[Pan and Scholz(2009)]{pan2009mind}
Rong Pan and Martin Scholz.
\newblock Mind the gaps: weighting the unknown in large-scale one-class
  collaborative filtering.
\newblock In \emph{Proceedings of the 15th ACM SIGKDD international conference
  on Knowledge discovery and data mining}, pages 667--676. ACM, 2009.

\bibitem[Rendle et~al.(2009)Rendle, Freudenthaler, Gantner, and
  Schmidt-Thieme]{rendle2009bpr}
S.~Rendle, C.~Freudenthaler, Z.~Gantner, and L.~Schmidt-Thieme.
\newblock Bpr: Bayesian personalized ranking from implicit feedback.
\newblock In \emph{Proceedings of UAI'09}, pages 452--461. AUAI Press, 2009.

\bibitem[Rendle and Freudenthaler(2014)]{rendle2014improving}
Steffen Rendle and Christoph Freudenthaler.
\newblock Improving pairwise learning for item recommendation from implicit
  feedback.
\newblock In \emph{Proceedings of WSDM'14}, pages 273--282. ACM, 2014.

\bibitem[Rendle et~al.(2020)Rendle, Krichene, Zhang, and
  Anderson]{rendle2020neural}
Steffen Rendle, Walid Krichene, Li~Zhang, and John Anderson.
\newblock Neural collaborative filtering vs. matrix factorization revisited.
\newblock \emph{arXiv preprint arXiv:2005.09683}, 2020.

\bibitem[Shrivastava and Li(2014)]{shrivastava2014asymmetric}
Anshumali Shrivastava and Ping Li.
\newblock Asymmetric lsh (alsh) for sublinear time maximum inner product search
  (mips).
\newblock In \emph{Proceedings of NIPS'14}, pages 2321--2329, 2014.

\bibitem[Sun et~al.(2019)Sun, Deng, Nie, and Tang]{sun2019rotate}
Zhiqing Sun, Zhi-Hong Deng, Jian-Yun Nie, and Jian Tang.
\newblock Rotate: Knowledge graph embedding by relational rotation in complex
  space.
\newblock \emph{arXiv preprint arXiv:1902.10197}, 2019.

\bibitem[Tan et~al.(2020)Tan, Zhou, Xu, and Li]{tan2020fast}
Shulong Tan, Zhixin Zhou, Zhaozhuo Xu, and Ping Li.
\newblock Fast item ranking under neural network based measures.
\newblock In \emph{Proceedings of WSDM'20}, pages 591--599, 2020.

\bibitem[Walker(1977)]{walker1977efficient}
Alastair~J Walker.
\newblock An efficient method for generating discrete random variables with
  general distributions.
\newblock \emph{ACM Transactions on Mathematical Software (TOMS)}, 3\penalty0
  (3):\penalty0 253--256, 1977.

\bibitem[Wang et~al.(2019{\natexlab{a}})Wang, Xu, Liu, Lian, Chen, Du, Wu, and
  Su]{wang2019mcne}
Hao Wang, Tong Xu, Qi~Liu, Defu Lian, Enhong Chen, Dongfang Du, Han Wu, and Wen
  Su.
\newblock Mcne: An end-to-end framework for learning multiple conditional
  network representations of social network.
\newblock In \emph{Proceedings of the 25th ACM SIGKDD International Conference
  on Knowledge Discovery \& Data Mining}, pages 1064--1072, 2019{\natexlab{a}}.

\bibitem[Wang et~al.(2019{\natexlab{b}})Wang, Shao, and
  Lian]{wang2019adversarial}
Haoyu Wang, Nan Shao, and Defu Lian.
\newblock Adversarial binary collaborative filtering for implicit feedback.
\newblock In \emph{Proceedings of the AAAI Conference on Artificial
  Intelligence}, volume~33, pages 5248--5255, 2019{\natexlab{b}}.

\bibitem[Wang et~al.(2017)Wang, Yu, Zhang, Gong, Xu, Wang, Zhang, and
  Zhang]{wang2017irgan}
Jun Wang, Lantao Yu, Weinan Zhang, Yu~Gong, Yinghui Xu, Benyou Wang, Peng
  Zhang, and Dell Zhang.
\newblock Irgan: A minimax game for unifying generative and discriminative
  information retrieval models.
\newblock In \emph{Proceedings of SIGIR'17}, pages 515--524. ACM, 2017.

\bibitem[Weston et~al.(2010)Weston, Bengio, and Usunier]{weston2010large}
Jason Weston, Samy Bengio, and Nicolas Usunier.
\newblock Large scale image annotation: learning to rank with joint word-image
  embeddings.
\newblock \emph{Machine learning}, 81\penalty0 (1):\penalty0 21--35, 2010.

\bibitem[Yuan et~al.(2016)Yuan, Guo, Jose, Chen, Yu, and
  Zhang]{yuan2016lambdafm}
Fajie Yuan, Guibing Guo, Joemon~M Jose, Long Chen, Haitao Yu, and Weinan Zhang.
\newblock Lambdafm: learning optimal ranking with factorization machines using
  lambda surrogates.
\newblock In \emph{Proceedings of CIKM'16}, pages 227--236, 2016.

\bibitem[Zhang et~al.(2013)Zhang, Chen, Wang, and Yu]{zhang2013optimizing}
Weinan Zhang, Tianqi Chen, Jun Wang, and Yong Yu.
\newblock Optimizing top-n collaborative filtering via dynamic negative item
  sampling.
\newblock In \emph{Proceedings of SIGIR'13}, pages 785--788. ACM, 2013.

\end{thebibliography}

\section*{Appendix}
\appendix
In the appendix, we start from the proofs of theorem 2.1 and theorem 2.2 in section~\ref{A}. Then, we prove the correctness of proposition 2.2 and proposition 2.3 in section~\ref{B}. After that, the detailed derivation of our proposed loss is provided in section~\ref{C}. At last, the sensitivity of some important parameters is discussed in section~\ref{D}.
\section{Proofs of Theorems}\label{A}
Before providing the proofs of the theorems, we restate some important notations first. In the following, denote by $\mathcal{C}$ the set of $N$ contexts, $\mathcal{I}$ the set of $M$ items and $\mathcal{I}_c$ interacted items in a context $c$. The objective function of IRGAN is as follows:
\begin{displaymath}
\min_{G}\max_{D}\mathcal{J}(D,G)=\sum_{c\in\mathcal{C}}E_{i\sim P_{\text{true}}(\cdot|c)}\log D(i|c)+E_{j \sim P_G(\cdot|c)}\log\left(1-D(j|c)\right),
\end{displaymath}
where $P_{\text{true}}(\cdot|c)$ is an underlying true relevance distribution over candidate items and $P_G(\cdot|c)$ is a probability distribution used to generate negative samples. $D(i|c)=\sigma(g_{\phi}(c,i))=\frac{1}{1+\exp(-g_{\phi}(c,i))}$ estimates the probability of preferring item $i$ in a context $c$.

%throrem 2.1
\begin{theorem}[\textbf{Theorem 2.1}]
	Assume $G$ has enough capacity. Given the discriminator $D$,  $\min_{G}\mathcal{J}(D,G)$ yields the optimum of $G$ as follows
	\begin{displaymath}
	P_{G^\star}(\cdot|c)=\text{one-hot}(\arg\max_i(g_{\phi}(c,i))).
	\end{displaymath}
\end{theorem}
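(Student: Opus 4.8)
The plan is to fix the discriminator $D$ and observe that only the second expectation in $\mathcal{J}(D,G)$ depends on the generator: the term $\sum_{c\in\mathcal{C}}\mathbb{E}_{i\sim P_{\text{true}}(\cdot|c)}\log D(i|c)$ is constant in this subproblem. Hence $\min_G\mathcal{J}(D,G)$ reduces to minimizing $\sum_{c\in\mathcal{C}}\mathbb{E}_{j\sim P_G(\cdot|c)}\log\bigl(1-D(j|c)\bigr)$ over $G$. Since $G$ has enough capacity, the conditional distributions $P_G(\cdot|c)$ may be chosen independently across contexts, so it suffices to solve, for each fixed $c$, the problem of minimizing $\sum_{j\in\mathcal{I}}P_G(j|c)\log\bigl(1-D(j|c)\bigr)$ over the probability simplex $\{P_G(\cdot|c)\ge 0,\ \sum_{j\in\mathcal{I}}P_G(j|c)=1\}$.

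Next I would substitute $D(j|c)=\sigma(g_\phi(c,j))$, so that $1-D(j|c)=\sigma(-g_\phi(c,j))=1/(1+\exp(g_\phi(c,j)))$ and therefore $\log\bigl(1-D(j|c)\bigr)=-\log\bigl(1+\exp(g_\phi(c,j))\bigr)=:-a_c(j)$, with $a_c(j)\ge 0$. The per-context subproblem then becomes the maximization of the convex combination $\sum_{j\in\mathcal{I}}P_G(j|c)\,a_c(j)$ of the numbers $\{a_c(j)\}_{j\in\mathcal{I}}$ over the simplex.

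The key step is then the standard fact that a linear functional over the probability simplex attains its maximum at a vertex, i.e.\ at a one-hot distribution, and the optimal vertex is the one placing all mass on an index maximizing $a_c(j)$. Because $t\mapsto\log(1+\exp(t))$ is strictly increasing, $\arg\max_{j}a_c(j)=\arg\max_{j}g_\phi(c,j)$, which yields $P_{G^\star}(\cdot|c)=\text{one-hot}\bigl(\arg\max_i g_\phi(c,i)\bigr)$, as claimed. I do not anticipate a genuine obstacle in this argument; the only mild subtlety is the possible non-uniqueness of $\arg\max_i g_\phi(c,i)$, in which case every distribution supported on the set of tied maximizers is also optimal and one simply selects a representative one-hot vector. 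A minor bookkeeping point is to confirm that the sign and minimax conventions of the appendix objective (written here as $\min_G\max_D$ of the non-negated expression) are consistent with the main-text form, so that the two statements of the theorem agree.
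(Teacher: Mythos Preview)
Your proposal is correct and mirrors the paper's own proof almost exactly: both drop the $P_{\text{true}}$ term as constant in $G$, reduce to a per-context linear optimization over the simplex, and use the monotonicity of $t\mapsto\log(1+e^t)$ (equivalently, that $\log(1-D(j|c))$ is decreasing in $g_\phi(c,j)$) to identify the optimal vertex with $\arg\max_j g_\phi(c,j)$. Your extra remarks on non-uniqueness of the argmax and on the sign/minimax convention are sound and go slightly beyond what the paper writes, but the core argument is the same.
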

\begin{proof}
	From the definition of $\mathcal{J}(D,G)$, when the discriminator $D$ is fixed, the first expectation is independent to $G$ so that it can be omitted when minimizing $\mathcal{J}$ w.r.t. $G$. Thus, the objective function is equivalent to 
	\begin{displaymath}
	\min_{G}\sum_{c\in \mathcal{C}}E_{j\sim P_G(\cdot|c)}\log(1-D(j|c)).
	\end{displaymath}
	Let $h_c(j)=\log(1-D(j|c))$, $j^\star=\arg\min_jh_c(j)$ so that $h_c(j^\star)\leq h_c(j), \forall j\in \mathcal{I}$. Note that $h_c(j)$ is a decreasing function w.r.t. $g_{\phi}(c,j)$, so $j^\star=\arg\min_jh_c(j)=\arg\max_jg_{\phi}(c, j)$. Then, in a context $c$, we have:
	\begin{displaymath}
	E_{j\sim P_G(\cdot|c)}h_c(j)=\sum_{j\in \mathcal{I}}P_G(j|c)h_c(j)\geq\sum_{j\in \mathcal{I}}P_G(j|c)h_c(j^\star)=h_c(j^\star)=\sum_{j\in \mathcal{I}}\textit{one-hot}(j^\star)h_c(j).
	\end{displaymath}
	Therefore, the optimum of $G$ follows $P_{G^\star}(\cdot|c)=\text{one-hot}(\arg\max_i(g_{\phi}(c,i)))$.
\end{proof}

%theorem 2.2
\begin{theorem}[\textbf{Theorem 2.2}]
	Assume $G$ has enough capacity and let $f_c(i)=-\log\left(1-D(i|c)\right)=\log(1+\exp(g_{\phi}(c,i)))$. Given the discriminator $D$,  $\min_{G}\mathcal{J}(D,G) - T\cdot \mathcal{H}(P_G(\cdot|c))$ yields the optimum of $G$ as follows
	\begin{displaymath}
	P_{{G}_T^\star}(i|c)=\frac{\exp\left(f_c(i)/T\right)}{\sum_{j}\exp\left(f_c(j)/T\right)}.
	\end{displaymath}
\end{theorem}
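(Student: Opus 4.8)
The plan is to follow the same reduction as in Theorem 2.1 and then solve a single constrained optimization. Since $G$ has enough capacity, $P_G(\cdot\mid c)$ may be taken to be an arbitrary probability vector on $\mathcal{I}$, independently for each context $c$; moreover the term $\mathbb{E}_{i\sim P_{\text{true}}(\cdot\mid c)}\log D(i\mid c)$ does not involve $G$, and the regularizer $\sum_{c}\mathcal{H}(P_G(\cdot\mid c))$ is context-separable, so the whole problem decouples into $N$ independent minimizations, one per context. Fixing $c$ and writing $p_j=P_G(j\mid c)$, using $\log(1-D(j\mid c))=-f_c(j)$ and $\mathcal{H}(P_G(\cdot\mid c))=-\sum_j p_j\log p_j$, the per-context objective becomes $\Phi_c(p)=\sum_{j\in\mathcal{I}}p_j\bigl(T\log p_j-f_c(j)\bigr)$, to be minimized over the probability simplex $\{p\ge 0:\ \sum_j p_j=1\}$.

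The key step is to recognize $\Phi_c$ as an affine image of a Kullback--Leibler divergence. I would set $Z_c=\sum_{j\in\mathcal{I}}\exp(f_c(j)/T)$ and $q_j=\exp(f_c(j)/T)/Z_c$; then a direct rearrangement gives $\Phi_c(p)=T\sum_j p_j\log(p_j/q_j)-T\log Z_c=T\,\mathrm{KL}(p\,\Vert\,q)-T\log Z_c$. Since $T>0$ and $\mathrm{KL}(p\Vert q)\ge 0$ with equality iff $p=q$, the minimum over the simplex is attained uniquely at $p=q$, i.e.\ $P_{G_T^\star}(i\mid c)=\exp(f_c(i)/T)\big/\sum_{j\in\mathcal{I}}\exp(f_c(j)/T)$, which is exactly the claimed softmax. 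This would also make transparent the role of $T$: as $T\to 0$ the optimum concentrates on $\arg\max_i f_c(i)=\arg\max_i g_\phi(c,i)$, recovering Theorem 2.1, while as $T\to\infty$ it flattens to the uniform distribution.

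As an alternative I could instead run a Lagrange-multiplier computation: form $L(p,\lambda)=\Phi_c(p)+\lambda\bigl(\sum_j p_j-1\bigr)$, set $\partial L/\partial p_j=T\log p_j+T-f_c(j)+\lambda=0$ to obtain $p_j\propto\exp(f_c(j)/T)$, pin down $\lambda$ by the normalization constraint, and confirm this stationary point is the minimizer via convexity of $\Phi_c$ (whose Hessian is diagonal with entries $T/p_j>0$). I do not anticipate a genuine obstacle here; the only point deserving a word of care is the boundary of the simplex, where $-p\log p$ is non-smooth, and the Lagrangian route would need the extra remark that $\Phi_c$, extended by continuity with the convention $0\log 0=0$, cannot take a smaller value on a face than at the interior point $q$. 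The KL formulation sidesteps even that, since $\mathrm{KL}(\cdot\Vert q)$ is defined on the whole simplex and its unique minimizer $q$ lies in the interior. In short, the statement is the standard maximum-entropy / Gibbs variational identity, so the proof reduces to the KL rewrite together with nonnegativity of relative entropy.
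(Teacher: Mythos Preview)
Your proposal is correct. The reduction to a per-context problem and the identification of $\log(1-D(j\mid c))=-f_c(j)$ mirror the paper exactly, but your main argument diverges from the paper's at the optimization step: you rewrite the per-context objective as $T\,\mathrm{KL}(p\Vert q)-T\log Z_c$ and invoke nonnegativity of relative entropy, whereas the paper sets up the Lagrangian with multipliers $\alpha_k$ for the nonnegativity constraints and $\beta$ for the normalization constraint and solves the KKT system explicitly. Your KL rewrite is the cleaner route: it yields a global minimizer in one stroke, handles the simplex boundary automatically (since $\mathrm{KL}(\cdot\Vert q)$ is well-defined there and strictly positive off $q$), and makes the limiting behavior in $T$ immediate. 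The paper's KKT argument is the textbook constrained-optimization workflow and delivers the same answer, but it has to appeal separately to convexity for sufficiency and carries the extra bookkeeping of the $\alpha_k$ multipliers (which all vanish anyway). Your ``alternative'' Lagrange calculation is essentially the paper's proof with the inequality multipliers suppressed.
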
 
\begin{proof}
	From the definition of $\mathcal{J}(D,G)$, when the discriminator $D$ is fixed, the first expectation is independent to $G$ so that it can be omitted when minimizing $\mathcal{J}$ w.r.t. $G$. Thus, the objective function is equivalent to
	\begin{displaymath}
	\min_{G}\sum_{c\in \mathcal{C}}\left(E_{j\sim P_G(\cdot|c)}\log(1-D(j|c))-T\cdot \mathcal{H}(P_G(\cdot|c))\right),
	\end{displaymath}
	where $\mathcal{H}(P_G(\cdot|c))=-\sum_{i\in \mathcal{I}}P_G(i|c)\log P_G(i|c)$ is the entropy regularization controlled by the temperature $T$. For simplicity of writing, regarding a certain context $c$, let $\bm{x}=[x_0, x_1,...,x_{M-1}]$ where $x_j=P_G(j|c)$ and $\bm{y}=[y_0, y_1,...,y_{M-1}]$ where $y_j=\log(1-D(j|c))=-f_c(j)$. Then, we can formalize the primal problem for the context $c$ as follows:
	\begin{displaymath}
	\begin{split}
	\min_{\bm{x}}\quad&\bm{x}^\top\bm{y}+T\sum_{k=0}^{M-1}x_k\log x_k,\\
	s.t. \quad&\sum_{k=0}^{M-1}x_k=1,\\
	&-x_k<0,\quad k=0,...,M-1.
	\end{split}
	\end{displaymath}
	This is a constrained optimization problem, so we can define the Lagrange $\mathcal{L}$ as follows:
	\begin{displaymath}
	\mathcal{L}(\bm{x}, \alpha_0,...,\alpha_{M-1}, \beta)=\bm{x}^\top\bm{y}+T\sum_{k=0}^{M-1}x_k\log x_k-\sum_{k=0}^{M-1}\alpha_kx_k+\beta(\sum_{k=0}^{M-1}x_k-1),
	\end{displaymath}
	where $\{\alpha_0,\alpha_1,...,\alpha_{M-1}, \beta\}$ are the Lagrange multipliers. Obviously, the primal problem is convex and the equality constraint is affine so that the \textbf{KKT} conditions are also sufficient for the points to be primal and dual optimal. Suppose $\hat{x}_k, \hat{\alpha}_k, \hat{\beta}$ are any primal and dual optimal, then, we have:
	\begin{displaymath}
	\begin{split}
	&\begin{cases}
	\frac{\partial \mathcal{L}}{\partial \hat{x}_k}=y_k+T(\log \hat{x}_k+1)-\hat{\alpha}_k+\hat{\beta}=0, & k=0,...,M-1\\
	\hat{\alpha}_k\hat{x}_k=0, & k=0,...,M-1\\
	\hat{\alpha}_k\geq 0, & k=0,...,M-1\\
	\hat{x}_k> 0, & k=0,...,M-1\\
	\sum_{k=0}^{M-1}\hat{x}_k-1=0\\
	\end{cases}\\
	\Rightarrow&\begin{cases}
	\hat{\alpha}_k=0, & k=0,...,M-1\\
	\hat{x}_k=\exp(\frac{-y_k-\hat{\beta}-T}{T}), & k=0,...,M-1\\
	\sum_{k=0}^{M-1}\exp(\frac{-y_k-\hat{\beta}-T}{T}) = 1
	\end{cases}\\
	\Rightarrow&\hat{x}_k=\frac{\exp(-y_k/T)}{\sum_{k'=0}^{M-1}\exp(-y_{k'}/T)}, \quad k=0,...,M-1
	\end{split}
	\end{displaymath}
	Since $x_k=P_G(k|c)$ and $y_k=-f_c(k)$, the optimal solution $P_{{G}_T^\star}(i|c)=\frac{\exp\left(f_c(i)/T\right)}{\sum_{j}\exp\left(f_c(j)/T\right)}$.
\end{proof}

\section{Proofs of Propositions}\label{B}
Here, we also restate some important notations first. $\mathcal{S}_c$ is a set of samples drawn from the generator $Q_{{G}}(\cdot|c)$ in the context $c$.
Our approximated loss and its variance are as follows:
\begin{displaymath}
\begin{gathered}
\mathcal{V}_T(D,\mathcal{S})=\sum_{c\in\mathcal{C}}\left(-\frac{1}{|\mathcal{I}_c|}\sum_{i\in\mathcal{I}_c}\log D(i|c)-\sum_{j\in \mathcal{S}_c} w_{cj}\log\left(1-D(j|c)\right)\right),\\
w_{cj}=\frac{\exp\left(f_c(j)/T-\log \tilde{Q}_{{G}}(j|c)\right)}{\sum_{i\in \mathcal{S}_c}\exp\left(f_c(i)/T-\log \tilde{Q}_{{G}}(i|c)\right)},\\
%f_c(j)=-\log\left(1-D(j|c)\right),\\
\text{Var}\left(\mathcal{V}_T(D,\mathcal{S})\right)=\sum_{c\in\mathcal{C}}\frac{1}{|\mathcal{S}_c|} \sum_{i\in\mathcal{I}}\frac{P_{G_T^\star}(i|c)^2 (f_c(i)-\mu_c)^2}{Q_G(i|c)},
\end{gathered}
\end{displaymath}
where $\mathcal{S}=\bigcup_{c\in \mathcal{C}} \mathcal{S}_c$ and $\tilde{Q}_{{G}}(j|c)$ is the unnormalized $Q_{{G}}(j|c)$.

% propositino 2.2
\begin{prop}[\textbf{Proposition 2.2}]
	$\text{Var}\left(\mathcal{V}_{T}(D,\mathcal{S})\right)\ge \sum_{c\in\mathcal{C}}\frac{1}{|\mathcal{S}_c|}\mathbb{E}_{i\sim P_{G_T^\star}(\cdot|c)}(|f_c(i)-\mu_c|)^2$, where the equality holds if $Q_G(i|c)\propto P_{G_T^\star}(i|c)|f_c(i)-\mu_c|$.
\end{prop}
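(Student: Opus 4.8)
The plan is to work context by context and apply the Cauchy--Schwarz inequality to the inner sum appearing in the variance formula. Fix a context $c$ and regard the two "vectors" indexed by $i\in\mathcal{I}$ given by $a_i = P_{G_T^\star}(i|c)\,|f_c(i)-\mu_c|/\sqrt{Q_G(i|c)}$ and $b_i = \sqrt{Q_G(i|c)}$. Then $\sum_i a_i^2$ is exactly the inner sum $\sum_{i\in\mathcal{I}}\frac{P_{G_T^\star}(i|c)^2(f_c(i)-\mu_c)^2}{Q_G(i|c)}$, while $\sum_i b_i^2 = \sum_i Q_G(i|c) = 1$ since $Q_G(\cdot|c)$ is a probability distribution, and $\sum_i a_i b_i = \sum_i P_{G_T^\star}(i|c)\,|f_c(i)-\mu_c| = \mathbb{E}_{i\sim P_{G_T^\star}(\cdot|c)}(|f_c(i)-\mu_c|)$.

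By Cauchy--Schwarz, $\left(\sum_i a_i b_i\right)^2 \le \left(\sum_i a_i^2\right)\left(\sum_i b_i^2\right)$, which rearranges to
\begin{displaymath}
\sum_{i\in\mathcal{I}}\frac{P_{G_T^\star}(i|c)^2(f_c(i)-\mu_c)^2}{Q_G(i|c)} \;\ge\; \left(\mathbb{E}_{i\sim P_{G_T^\star}(\cdot|c)}(|f_c(i)-\mu_c|)\right)^2.
\end{displaymath}
Multiplying by $1/|\mathcal{S}_c|$ and summing over all $c\in\mathcal{C}$ yields the claimed lower bound on $\text{Var}(\mathcal{V}_T(D,\mathcal{S}))$. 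For the equality case, recall that Cauchy--Schwarz is tight iff $a_i$ and $b_i$ are proportional, i.e. $P_{G_T^\star}(i|c)\,|f_c(i)-\mu_c|/\sqrt{Q_G(i|c)} \propto \sqrt{Q_G(i|c)}$, equivalently $Q_G(i|c)\propto P_{G_T^\star}(i|c)\,|f_c(i)-\mu_c|$; the normalizing constant is then forced by $\sum_i Q_G(i|c)=1$. Since the variance is a sum of nonnegative per-context terms, equality in the aggregate bound holds precisely when this proportionality holds in every context $c$ (with $|\mathcal{S}_c|>0$).

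I do not anticipate a real obstacle here; the only points requiring a little care are (i) noting that $Q_G(i|c)>0$ wherever $P_{G_T^\star}(i|c)|f_c(i)-\mu_c|\neq 0$, so the quotients are well defined and the vector $a_i$ makes sense (terms with $Q_G(i|c)=0$ would force infinite variance and are excluded), and (ii) making explicit that the per-context equality conditions must hold simultaneously for the summed inequality to be tight. The statement of the proposition only asserts sufficiency of the proportionality condition for equality, so (ii) need only be stated in that direction.
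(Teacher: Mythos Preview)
Your proof is correct and takes essentially the same approach as the paper. The paper phrases Cauchy--Schwarz in its probabilistic form $E(X^2)E(Y^2)\ge|E(XY)|^2$ with $Y\equiv 1$ and expectation taken under $Q_G(\cdot|c)$, which is exactly your vector inequality with $a_i,b_i$ as chosen; for the equality case the paper verifies by direct substitution of the optimal $Q_G$, whereas you invoke the tightness condition of Cauchy--Schwarz, but the content is identical.
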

\begin{proof}
	According to Cauchy–Schwarz inequality, let $X$ and $Y$  be random variables, then we have the following inequality
	\begin{displaymath}
	E(X^2)E(Y^2)\ge|E(XY)|^2.
	\end{displaymath}
	Now, let $Y\equiv1$ and suppose all contexts are IID, so we have:
	\begin{displaymath}
	\begin{split}
	\text{Var}\left(\mathcal{V}_T(D,\mathcal{S})\right)&=\sum_{c\in\mathcal{C}}\frac{1}{|\mathcal{S}_c|} \sum_{i\in\mathcal{I}}\frac{P_{G_T^\star}(i|c)^2 (f_c(i)-\mu_c)^2}{Q_G(i|c)}\\
	&=\sum_{c\in\mathcal{C}}\frac{1}{|\mathcal{S}_c|}E_{i\sim Q_G(i|c)}\left(\left|\frac{P_{G_T^\star}(i|c) (f_c(i)-\mu_c)}{Q_G(i|c)}\right|^2\right)\\
	&\ge \sum_{c\in\mathcal{C}}\frac{1}{|\mathcal{S}_c|}E_{i\sim Q_G(i|c)} \left(\left|\frac{P_{G_T^\star}(i|c) (f_c(i)-\mu_c)}{Q_G(i|c)}\right|\right) ^2\\
	&=\sum_{c\in\mathcal{C}}\frac{1}{|\mathcal{S}_c|}E_{i\sim P_{G_T^\star}(i|c)} (|f_c(i)-\mu_c|)^2
	\end{split}		
	\end{displaymath}
	
	When $Q_G(i|c)\propto P_{G_T^\star}(i|c)|f_c(i)-\mu_c|$, let
	\begin{displaymath}
	Q_G(i|c)=\frac{P_{G_T^\star}(i|c)|f_c(i)-\mu_c|}{\sum_{j\in\mathcal{I}}P_{G_T^\star}(j|c)|f_c(j)-\mu_c|}=\frac{P_{G_T^\star}(i|c)|f_c(i)-\mu_c|}{E_{j\sim P_{G_T^\star(\cdot|c)}}|f_c(j)-\mu_c|}.
	\end{displaymath}
	Then, the variance becomes
	\begin{displaymath}
	\begin{split}
	\text{Var}\left(\mathcal{V}_T(D,\mathcal{S})\right)&=\sum_{c\in\mathcal{C}}\frac{1}{|\mathcal{S}_c|} \sum_{i\in\mathcal{I}}\frac{P_{G_T^\star}(i|c)^2 (f_c(i)-\mu_c)^2}{Q_G(i|c)}\\
	&=\sum_{c\in\mathcal{C}}\frac{1}{|\mathcal{S}_c|}\sum_{i\in\mathcal{I}}P_{G_T^\star}(i|c)|f_c(i)-\mu_c|E_{j\sim P_{G_T^\star}}|f_c(j)-\mu_c|\\
	&=\sum_{c\in\mathcal{C}}\frac{1}{|\mathcal{S}_c|}E_{j\sim P_{G_T^\star}}|f_c(j)-\mu_c|E_{i\sim P_{G_T^\star}}|f_c(i)-\mu_c|\\
	&=\sum_{c\in\mathcal{C}}\frac{1}{|\mathcal{S}_c|}E_{i\sim P_{G_T^\star}}(|f_c(i)-\mu_c|)^2.
	\end{split}
	\end{displaymath}
	%Therefore, $\text{Var}\left(\mathcal{V}_{T}(D,\mathcal{S})\right)\ge \sum_{c\in\mathcal{C}}\frac{1}{|\mathcal{S}_c|}\mathbb{E}_{i\sim P_{G_T^\star}(\cdot|c)}(|f_c(i)-\mu_c|)^2$.
\end{proof}

% propositino 2.3
\begin{prop}[\textbf{Proposition 2.3}]
	If $\forall c$, $\mathcal{S}_c$ drawn i.i.d from $\text{uniform}(\mathcal{I})$, then $w_{cj}=\frac{\exp\left(f_c(j)/T\right)}{\sum_{i\in \mathcal{S}_c}\exp\left(f_c(i)/T\right)}$ and $\forall 0 < T_1 < T_2 < +\infty $,
	\begin{displaymath}
	\lim_{T\rightarrow +\infty}\mathcal{V}_{T}(D,\mathcal{S})<\mathcal{V}_{T_2}(D,\mathcal{S})< \mathcal{V}_{T_1}(D,\mathcal{S})< \lim_{T\rightarrow 0}\mathcal{V}_{T}(D,\mathcal{S}).
	\end{displaymath}
\end{prop}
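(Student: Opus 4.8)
The plan is to prove the three inequalities separately, treating the two limits and the monotonicity of $T \mapsto \mathcal{V}_T(D,\mathcal{S})$ on $(0,+\infty)$ as distinct claims. Since $\mathcal{V}_T(D,\mathcal{S})$ is a sum over $c \in \mathcal{C}$ of terms each consisting of a $G$-independent piece $-\tfrac{1}{|\mathcal{I}_c|}\sum_{i\in\mathcal{I}_c}\log D(i|c)$ plus a weighted piece $-\sum_{j\in\mathcal{S}_c} w_{cj}\log(1-D(j|c)) = \sum_{j\in\mathcal{S}_c} w_{cj} f_c(j)$, and since under uniform sampling the $G$-independent piece and the sample sets $\mathcal{S}_c$ do not depend on $T$, it suffices to work context-by-context and analyze the single quantity
\begin{displaymath}
\Phi_c(T) = \sum_{j\in\mathcal{S}_c} w_{cj}(T)\, f_c(j), \qquad w_{cj}(T) = \frac{\exp(f_c(j)/T)}{\sum_{i\in\mathcal{S}_c}\exp(f_c(i)/T)},
\end{displaymath}
where I first verify the stated simplification of $w_{cj}$ by cancelling the constant factor $1/M$ from $\tilde{Q}_G(j|c) = 1/M$ in numerator and denominator. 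Then $\mathcal{V}_T(D,\mathcal{S}) = \text{const} + \sum_c \Phi_c(T)$, and all four comparisons reduce to the corresponding comparisons for each $\Phi_c$.

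For the two endpoints, I would argue as follows. As $T\to+\infty$, each $f_c(j)/T \to 0$, so $w_{cj}(T) \to 1/|\mathcal{S}_c|$ and $\Phi_c(T) \to \tfrac{1}{|\mathcal{S}_c|}\sum_{j\in\mathcal{S}_c} f_c(j)$, the plain average. As $T\to 0$, write $w_{cj}(T) = \exp((f_c(j) - \max_i f_c(i))/T) / \sum_{i}\exp((f_c(i)-\max_i f_c(i))/T)$; the weight concentrates on the argmax (dividing equally among ties), so $\Phi_c(T) \to \max_{j\in\mathcal{S}_c} f_c(j)$. Since $f_c$ is not constant on $\mathcal{S}_c$ in the generic case, the average is strictly less than the maximum, giving the outer two strict inequalities — here I should either note the genericity assumption or observe that if $f_c$ were constant on $\mathcal{S}_c$ all four quantities coincide and the statement holds with equalities (the paper evidently intends the strict version under the implicit non-degeneracy).

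The core of the argument, and the main obstacle, is the strict monotonicity: $\Phi_c(T_2) < \Phi_c(T_1)$ for $0 < T_1 < T_2 < +\infty$, i.e.\ $\Phi_c$ is strictly decreasing in $T$. The clean way is to reparametrize by $\beta = 1/T$ and show $\Psi_c(\beta) := \Phi_c(1/\beta) = \sum_j f_c(j) e^{\beta f_c(j)} / \sum_j e^{\beta f_c(j)}$ is strictly increasing in $\beta$. This is exactly the mean of $f_c$ under the Gibbs measure $\pi_\beta(j) \propto e^{\beta f_c(j)}$ on $\mathcal{S}_c$, and its derivative is $\Psi_c'(\beta) = \operatorname{Var}_{\pi_\beta}(f_c(\cdot)) \ge 0$, with strict inequality whenever $f_c$ is non-constant on $\mathcal{S}_c$. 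I would carry out the quotient-rule differentiation explicitly to exhibit $\Psi_c'(\beta) = \mathbb{E}_{\pi_\beta}[f_c^2] - (\mathbb{E}_{\pi_\beta}[f_c])^2$, invoke that a variance is nonnegative (zero only for a degenerate distribution), and then translate back: since $\beta = 1/T$ is strictly decreasing in $T$, $\Phi_c$ is strictly decreasing in $T$ on $(0,+\infty)$. Summing the per-context inequalities and adding the common constant yields $\mathcal{V}_{T_2}(D,\mathcal{S}) < \mathcal{V}_{T_1}(D,\mathcal{S})$, and combining with the endpoint limits (which are themselves the monotone limits of $\Phi_c$ as $T\to 0^+$ and $T\to+\infty$) completes the chain of four inequalities.
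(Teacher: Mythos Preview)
Your proposal is correct and follows the same overall strategy as the paper: reduce to the per-context quantity $\Phi_c(T)=\sum_{j\in\mathcal{S}_c} w_{cj}(T)\,f_c(j)$ and show it is strictly decreasing in $T$ by differentiation. The paper differentiates directly in $T$, obtaining
\[
\frac{\partial \Phi_c}{\partial T}=\frac{1}{Z^2}\sum_{i,j\in\mathcal{S}_c}\exp\!\Bigl(\tfrac{f_i+f_j}{T}\Bigr)\,\frac{f_if_j-f_j^2}{T^2},
\]
and then pairs the $(i,j)$ and $(j,i)$ summands to exhibit $-\exp((f_i+f_j)/T)(f_i-f_j)^2/T^2\le 0$. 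Your reparametrization $\beta=1/T$ with the identification $\Psi_c'(\beta)=\operatorname{Var}_{\pi_\beta}(f_c)$ is the same computation in disguise---the paper's double sum is exactly $-T^{-2}\operatorname{Var}_{\pi_{1/T}}(f_c)$---but packaged more transparently as a Gibbs-measure variance. You are also more careful than the paper on two points the paper glosses over: you explicitly identify the two endpoint limits ($T\to 0^+$ gives $\max_{j\in\mathcal{S}_c} f_c(j)$, $T\to +\infty$ gives the uniform average), and you correctly flag that \emph{strict} inequality throughout requires $f_c$ to be non-constant on $\mathcal{S}_c$, an implicit non-degeneracy assumption the paper's ``$<0$'' simply takes for granted.
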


\begin{proof}
	To prove this proposition, we just have to prove that $\mathcal{V}_{T}(D,\mathcal{S})$ is a decreasing function w.r.t. $T$ $\forall T>0$.
	\begin{displaymath}
	\begin{split}
	\frac{\partial \mathcal{V}_{T}(D,\mathcal{S})}{\partial T}&=\sum_{c\in\mathcal{C}}\sum_{j\in\mathcal{S}_c}\frac{\partial w_{cj}}{\partial T}f_c(j).\\
	\end{split}
	\end{displaymath}
	Considering the second summation, for simplicity of writing, let $f_j=f_c(j)$ for a certain context $c$.
	Then, for a context $c$, we have
	\begin{displaymath}
	\begin{split}
	\sum_{j\in\mathcal{S}_c}\frac{\partial w_{cj}}{\partial T}f_j&=\sum_{j\in\mathcal{S}_c}\frac{\exp(\frac{f_j}{T})(-\frac{f_j}{T^2})\left(\sum_{i\in \mathcal{S}_c}\exp(\frac{f_i}{T})\right)-\exp(\frac{f_j}{T})\left(\sum_{i\in\mathcal{S}_c}\exp(\frac{f_i}{T})(-\frac{f_i}{T^2})\right)}{\left(\sum_{i\in \mathcal{S}_c}\exp(\frac{f_i}{T})\right)^2}f_j\\
	&=\frac{1}{\left(\sum_{i\in \mathcal{S}_c}\exp(\frac{f_i}{T})\right)^2}\sum_{j\in\mathcal{S}_c}\sum_{i\in\mathcal{S}_c}\exp(\frac{f_i+f_j}{T})(\frac{f_if_j-f_j^2}{T^2}).
	\end{split}	
	\end{displaymath}
	When $i=j$, it is obvious that the addend equals 0. Regarding the rest of addends, we can rearrange them into a set of pairs. Specifically, let $h(i,j)=\exp(\frac{f_i+f_j}{T})(\frac{f_if_j-f_j^2}{T^2})$. $\forall i,j\in\mathcal{S}_c \wedge i\neq j$, $h(i,j)+h(j,i)=\exp(\frac{f_i+f_j}{T})(\frac{-(f_i-f_j)^2}{T^2})<0$. Therefore, we have $\sum_{j\in\mathcal{S}_c}\frac{\partial w_{cj}}{\partial T}f_j<0$ so that $\frac{\partial \mathcal{V}_{T}(D,\mathcal{S})}{\partial T}<0$. In other words, $\mathcal{V}_{T}(D,S)$ is a decreasing function w.r.t. $T$.
\end{proof}

\section{Derivation of the Proposed Objective Function}\label{C}
Here, we illustrate the detailed derivation of our approximated loss for learning the discriminator. For each context $c$, considering items in $\mathcal{I}_c$ are observed data sampled from $P_{true}(\cdot|c)$ which are IID, items in $\mathcal{S}_{c}$ are sampled from $Q_{{G}}(\cdot|c)$. Then, we have:
\begin{displaymath}
\begin{split}
&E_{i\sim P_{true}(\cdot|c)}\log D(i|c)\approx\frac{1}{|\mathcal{I}_c|}\sum_{i\in\mathcal{I}_c}\log D(i|c),\\
&E_{j \sim P_{G_T^\star}(\cdot|c)}\log(1-D(j|c))\approx\frac{1}{|\mathcal{S}_c|}\sum_{j\in \mathcal{S}_c}\frac{P_{G_T^\star}(j|c)}{Q_G(j|c)}\log(1-D(j|c)),\\
&P_{G_T^\star}(j|c)=\frac{\exp(f_c(j)/T)}{\sum_{i\in\mathcal{I}}\exp(f_c(i)/T)}.
\end{split}
\end{displaymath}
In particular, the normalization constant of $P_{G_T^\star}(\cdot|c)$ (denoted as $Z_{G_T^\star}$) can be approximated by the samples $\mathcal{S}_{c}$ as:
\begin{displaymath}
Z_{G_T^\star}=\sum_{i\in\mathcal{I}}\exp(f_c(i)/T)=E_{i\sim{Q_G(\cdot|c)}}\frac{\exp(f_c(i)/T)}{Q_G(\cdot|c)}\approx Z_{Q}\frac{1}{|\mathcal{S}_c|}\sum_{i\in\mathcal{S}_c}\exp(f_c(i)/T-\log \tilde{Q}_{G}(i|c)),
\end{displaymath}
where $\tilde{Q}_{G}(i|c)$ is the unnormalized $Q_{{G}}(i|c)$ such that $\tilde{Q}_{G}(i|c)=Z_QQ_{{G}}(i|c)$. 
Then, we can approximate $\frac{P_{G_T^\star}(j|c)}{Q_G(j|c)}$ as follows:
\begin{displaymath}
\begin{split}
\frac{P_{G_T^\star}(j|c)}{Q_G(j|c)}&\approx\frac{\exp(f_c(j)/T)}{Q_G(j|c)Z_Q\frac{1}{|\mathcal{S}_c|}\sum_{i\in\mathcal{S}_c}\exp(f_c(i)/T-\log \tilde{Q}_{G}(i|c))}\\
&=\frac{\exp(f_c(j)/T-\log\tilde{Q}_{G}(j|c))}{\frac{1}{|\mathcal{S}_c|}\sum_{i\in\mathcal{S}_c}\exp(f_c(i)/T-\log \tilde{Q}_{G}(i|c))}.
\end{split}
\end{displaymath}
To sum up:
\begin{displaymath}
\begin{split}
\mathcal{J}(D,{G}_T^\star)&=\sum_{c\in\mathcal{C}}-\mathbb{E}_{i\sim P_{\text{true}}(\cdot|c)}\log D(i|c)-\mathbb{E}_{j \sim P_{G_T^\star}(\cdot|c)}\log\left(1-D(j|c)\right)\\
&=\sum_{c\in\mathcal{C}}-\mathbb{E}_{i\sim P_{\text{true}}(\cdot|c)}\log D(i|c)-\mathbb{E}_{j \sim Q_{{G}}(\cdot|c)}\frac{P_{G_T^\star}(j|c)}{Q_{{G}}(j|c)}\log\left(1-D(j|c)\right)\\
&\approx\mathcal{V}_T(D,\mathcal{S})=\sum_{c\in\mathcal{C}}\left(-\frac{1}{|\mathcal{I}_c|}\sum_{i\in\mathcal{I}_c}\log D(i|c)-\sum_{j\in \mathcal{S}_c} w_{cj}\log\left(1-D(j|c)\right)\right),\\
w_{cj}&=\frac{\exp\left(f_c(j)/T-\log \tilde{Q}_{{G}}(j|c)\right)}{\sum_{i\in \mathcal{S}_c}\exp\left(f_c(i)/T-\log \tilde{Q}_{{G}}(i|c)\right)}.
\end{split}
\end{displaymath}
\begin{figure}[h]
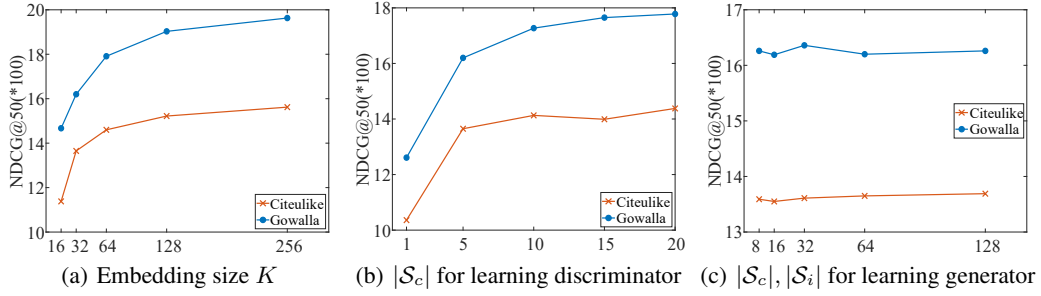

	\centering
	\subfigure[Embedding size $K$]{
		\includegraphics[scale=0.127]{figure/emb_size.pdf}
		\label{fig:emb_size}
	}
	\subfigure[$|\mathcal{S}_c|$ for learning discriminator]{
		\includegraphics[scale=0.127]{figure/neg_sample.pdf}
		\label{fig:neg_sample}
	}
	\subfigure[$|\mathcal{S}_c|,|\mathcal{S}_i|$ for learning generator]{
		\includegraphics[scale=0.127]{figure/approx_sample.pdf}
		\label{fig:approx_num}
	}
	\caption{Effects of different hyper parameters.}
\end{figure}
\section{Parameter Sensitivity}\label{D}
Here, we explore the sensitivity of some important parameters including the embedding size, the number of negative samples for the discriminator, and the number of samples for the generator. We report the results on two datasets (i.e., CiteULike and Gowalla). For the other datasets, similar observations can be found. 

Figure~\ref{fig:emb_size} demonstrates the effects of the embeddings size (i.e., $K$). We vary the dimension of user and item embeddings in the set $\{16,32,64,128,256\}$. We can observe when the embedding size increases, the performance improves quickly at first and then slows down. Considering that when the embedding size becomes larger, the training and inference stages will spend more time. Thus, it is significant to choose an appropriate size in practice. 

Figure~\ref{fig:neg_sample} shows the effects of the number of item sample set for learning the discriminator. It has a similar tendency to the embedding size. We vary the number of negative samples from 1 to 20 with a step 5. The results demonstrate that when $\mathcal{S}_c$ is larger than 5, the improvements is limited, and even a slight drop. This observation implies feeding more negative samples with weight scores can improve the recommendation performance. 

Figure~\ref{fig:approx_num} reports the effects of the number of item and context sample set for learning the generator. We set $|\mathcal{S}_c|=|\mathcal{S}_i|$ and vary the numbers in the set $\{8, 16,32,64,128\}$. We can find SD-GAR is not sensitive to this hyper-parameter. This observation ensures that it is effective to utilize sampling techniques for approximation when updating the generator. In addition, this conclusion also reveals the computation cost can be further reduced by cutting down the sample number.

\end{document}